\newcommand{\R}{\mathbb{R}}%
\newtheorem{lem}{Lemma}
\newtheorem{rem}{Remark}
\begin{document}

\title{Development and Analysis of a Block-Preconditioner for the Phase-Field Crystal Equation}
\author{Simon Praetorius\footnotemark[2]\ \footnotemark[3]
\and Axel Voigt\footnotemark[3]}

\maketitle

\begin{abstract}
We develop a preconditioner for the linear system arising from a finite element discretization of the Phase Field Crystal (PFC) equation. The PFC model serves as an atomic description of crystalline materials on diffusive time scales and thus offers the opportunity to study long time behaviour of materials with atomic details. This requires adaptive time stepping and efficient time discretization schemes, for which we use an embedded Rosenbrock scheme. To resolve spatial scales of practical relevance, parallel algorithms are also required, which scale to large numbers of processors. The developed preconditioner provides such a tool. It is based on an approximate factorization of the system matrix and can be implemented efficiently. The preconditioner is analyzed in detail and shown to speed up the computation drastically.
\end{abstract}

\begin{keywords}
Phase-Field Crystal equation, Preconditioner, Finite Element method, Spectral analysis, Rosenbrock time-discretization
\end{keywords}

\begin{AMS}
65F08, % Numerical analysis / Numerical linear algebra / Preconditioners for iterative methods
65F10, % Numerical analysis / Numerical linear algebra / Iterative methods for linear systems
65N22, % Numerical analysis / Partial differential equations[...] / Solution of discretized equations 	
65Y05, % Numerical analysis / Computer aspects of numerical algorithms / Parallel computation
65Z05, % Numerical Analysis / Applications to physics
82C21, % Statistical mechanics, structure of matter / Time-dependent statistical mechanics / Dynamic continuum models
82D25, % Statistical mechanics, structure of matter / Applications to specific types of physical systems / Crystals
\end{AMS}
 
\renewcommand{\thefootnote}{\fnsymbol{footnote}}

\footnotetext[2]{Corresponding author: \url{simon.praetorius@tu-dresden.de} (Simon Praetorius)}
\footnotetext[3]{Institut f{\"u}r Wissenschaftliches Rechnen, Technische Universit{\"a}t Dresden, D-01062 Dresden, Germany}

\renewcommand{\thefootnote}{\arabic{footnote}}

\section{Introduction}
The Phase Field Crystal (PFC) model was introduced as a phenomenological model for solid state phenomena on an atomic scale \cite{Elder2002,Elder2004}. However, it can also be motivated and derived through classical dynamic density functional theory (DDFT) \cite{Elder2007,Teeffelen2009} and has been used for various applications in condensed and soft matter physics, see the review \cite{Emmerich2013} and the references therein. Applications include non-equilibrium processes in complex fluids \cite{Archer2009, Praetorius2011}, dislocation dynamics \cite{Chan2010}, nucleation processes \cite{Backofen2007, Backofen2010a, Guo2012, Backofen2014}, (dendritic) growth  \cite{Fallah2013, Wu2012, Tang20111} and grain growth \cite{Backofen2014a}.

The main solution methods for the PFC model, which is a non-linear 6th order parabolic partial differential equation, are finite-difference discretizations and spectral methods, which are combined with an explicit or semi-implicit time-discretization. Numerical details are described in  \cite{Cheng2008, Hirouchi2009, Hu2009, Tegze2009, Elsey2013}.

Recently, the PFC model has been coupled to other field variables, such as flow \cite{Praetorius2011}, orientational order \cite{Achim2011, Praetorius2013} and mesoscopic phase-field parameters \cite{Kundin2014}. This limits the applicability of spectral methods due to the lack of periodic boundary conditions in these applications. On the other hand, simulations in complex geometries have been considered, e.g. colloids in confinements motivated by studies of DDFT \cite{Rauscher2011}, crystallization on embedded manifolds \cite{Backofen2010, Backofen2011, Aland2012c, Schmid2014} or particle-stabilized emulsions, where the PFC model is considered at fluid-fluid interfaces \cite{Aland2011, Aland2012}. These applicabilities also limit the use of spectral and finite-difference methods or make them sometimes even impossible. The finite element method provides high flexibility concerning complex geometries and coupling to other partial differential equations, which is the motivation to develop efficient solution methods for the PFC model based on finite element discretizations.

Basic steps of finite element methods include refinement and coarsening of a mesh, error estimation, assembling of a linear system and solving the linear system. Most previous finite element simulations for the PFC model \cite{Backofen2007, Backofen2010, Praetorius2011} have used direct solvers for the last step, which however restrict the system size due to the high memory requirements and only allow computations in 2D. Well-established solution methods for linear systems, such as iterative Krylov-subspace solvers, like CG, MINRES, GMRES, TFQMR, BiCGStab are not directly applicable for the PFC equation or do not converge, respectively converge very slowly, if used without or with standard preconditioners, like Jacobi or ILU preconditioners.

In this paper, we propose a block-preconditioner for the discretized PFC equations and analyze it with respect to convergence properties of a GMRES method. We have organized the paper as follows: In the next section, we formulate the PFC model in terms of a higher order non-linear partial differential equation. Section \ref{seq:discretization} introduces a space- and time-discretization of the model, including the treatment of the non-linearity. In Section \ref{seq:preconditioner}, the preconditioner is introduced and an efficient preconditioning procedure is formulated. The convergence analysis of GMRES is introduced in Section \ref{seq:convergence_analysis} and Section \ref{seq:spectral_analysis} provides an analysis of the preconditioner in terms of a spectral analysis. Finally, in Section \ref{seq:numerical_examples} we examine the preconditioner in numerical examples and demonstrate its efficiency. Conclusion and outlook are provided in Section \ref{seq:conclusion}.

% ===========================================================================
% ===========================================================================
\section{Modelling}\label{seq:modelling}
We consider the original model introduced in \cite{Elder2002}, which is a conserved gradient flow of a Swift-Hohenberg energy and serves as a model system for a regular periodic wave-like order-parameter field that can be interpreted as particle density. The Swift-Hohenberg energy is given here in a simplified form:
\begin{equation}\label{eq:swift-hohenberg-energy} F(\psi) = \int_\Omega \frac{1}{4}\psi^4 + \frac{1}{2}\psi(r+(1+\Delta)^2)\psi\,\text{d}x,\end{equation}
where the order-parameter field $\psi$ describes the deviation from a reference density, the parameter $r$ can be related to the temperature of the system and $\Omega\subset\R^m,\,m=1,2,3$ is the spatial domain. According to the notation in 
\cite{Teeffelen2009} we consider the $H^{-1}$-gradient flow of $F$, the PFC2-model:
\begin{equation}\label{eq:H-1-gradientflow}
\partial_t\psi = \Delta\frac{\delta F[\psi]}{\delta\psi},
\end{equation}
respective a Wasserstein gradient flow \cite{Jordan1997} of $F$, the PFC1-model, as a generalization of \eqref{eq:H-1-gradientflow}:
\begin{equation}\label{eq:wasserstein-gradientflow}
\partial_t\psi = \nabla\cdot \Big(\psi^+\nabla\frac{\delta F[\psi]}{\delta\psi}\Big),
\end{equation}
with a mobility coefficient $\psi^+=\psi-\psi_\text{min}\geq 0$ with the lower bound\footnote{The lower bound $\psi_{\min}=-1.5$ is due to the scaling and shifting of the order-parameter from a physical density with lower bound $0$.} $\psi_\text{min}=-1.5$. By calculus of variations and splitting of higher order derivatives, we can find a set of second order equations, which will be analyzed in this paper:
\begin{equation}\label{eq:pfc-equation}\begin{split}
\mu &= \psi^3+(1+r)\psi+2\Delta\psi+\Delta\omega\qquad\text{ in }\Omega\times[0,T] \\
\partial_t\psi &= \nabla\cdot (\psi^+\nabla\mu) \\
\omega &= \Delta\psi,
\end{split}\end{equation}
for a time interval $[0,T]$ and subject to initial condition $\psi(t=0)=\psi_0$ in $\bar{\Omega}$ and boundary conditions on $\partial\Omega$, e.g. homogeneous Neumann boundary conditions
\[\partial_n\psi=\partial_n\omega=\psi^+\partial_n\mu=0\quad\text{ on }\partial\Omega.\]

% ===========================================================================
% ===========================================================================
\section{Discrete equations}\label{seq:discretization}
To transform the partial differential equation \eqref{eq:pfc-equation} into a system of linear equations, we discretize in space using finite elements and in time using a backward Euler discretization, respective a Rosenbrock discretization scheme. 

Let $\Omega\subset\mathbb{R}^m$ be a regular domain ($m=1,2,3$) with a conforming triangulation $\mathcal{T}_h(\Omega)$ with $h=\max_{T\in\mathcal{T}_h}(h_T)$ a discretization parameter describing the maximal element size in the triangulation. We consider simplicial meshes, i.e. made of line segments in 1D, triangles in 2D and tetrahedra in 3D. Let \[V_h:=\{v\in H^1(\Omega)\;;\; v|_T\in\mathbb{P}_p(T),\;\forall T\in\mathcal{T}_h(\Omega)\}\] be the corresponding finite element space, with $\mathbb{P}_p(T)$ the space of local polynomials of degree  $\leq p$, where we have chosen $p=1,2$ in our simulations. The problem \eqref{eq:pfc-equation} in discrete weak form can be stated as: 

Find $\mu_h,\psi_h,\omega_h\in L_2(0,T;\,V_h)$ with $\psi_h(t=0)=\psi_0\in L_2(\Omega)$, s.t.
\begin{align}
(\mu_h - \psi_h^3 - (1+r)\psi_h,\vartheta_h)_\Omega &+ (2\nabla\psi_h + \nabla\omega_h, \nabla\vartheta_h)_\Omega \notag \\
+\,(\partial_t\psi_h,\vartheta'_h)_\Omega &+ (\psi_h^+\nabla\mu_h, \nabla\vartheta'_h)_\Omega  \label{eq:discrete_pfc} \\
+\, (\omega_h, \vartheta''_h)_\Omega &+ (\nabla\psi_h, \nabla\vartheta''_h)_\Omega = 0 && \forall\vartheta_h, \vartheta'_h, \vartheta''_h\in V_h, \notag
\end{align} %(U,\vartheta')_\Omega
with $(u,v)_\Omega:=\int_\Omega u\cdot v\,\text{d}x$. 

In the following let $0=t_0<t_1<\ldots<t_N=T$ be a discretization of the time interval $[0,T]$. Let $\tau_k:=t_{k+1}-t_k$ be the timestep width in the $k$-th iteration and $\psi_k\equiv\psi_h(t_k)$, respective $\mu_k\equiv\mu_h(t_k)$ and $\omega_k\equiv\omega_h(t_k)$ the discrete functions at time $t_k$. Applying a semi-implicit Euler discretization to \eqref{eq:discrete_pfc} results in a time and space discrete system of equations:

Let $\psi_0\in L_2(\Omega)$ be given. For $k=0,1,\ldots,N-1$ find $\mu_{k+1},\psi_{k+1},\omega_{k+1}\in V_h$, s.t.
\begin{equation}\label{eq:time_discrete_pfc}\begin{split} 
a^{(e)}\big((\mu_{k+1}, \psi_{k+1},\omega_{k+1}) &, (\vartheta_h, \vartheta'_h, \vartheta''_h)\big) := \\
    (\mu_{k+1} &- (1+r)\psi_{k+1},\vartheta_h)_\Omega + (2\nabla\psi_{k+1} + \nabla\omega_{k+1}, \nabla\vartheta_h)_\Omega \\
	&+ \big(\psi_{k+1},\vartheta'_h\big)_\Omega + (\tau_k \psi_{k}^+\nabla\mu_{k+1}, \nabla\vartheta'_h)_\Omega \\
	&+ (\omega_{k+1}, \vartheta''_h)_\Omega + (\nabla\psi_{k+1}, \nabla\vartheta''_h)_\Omega \\
	=(\psi_{k}^3,\vartheta_h)_\Omega &+ \big(\psi_{k},\vartheta'_h\big)_\Omega =: \big\langle F^{(e)}, (\vartheta_h, \vartheta'_h, \vartheta''_h)\big\rangle\qquad \forall\vartheta_h, \vartheta'_h, \vartheta''_h\in V_h.
\end{split}\end{equation}

Instead of taking $\psi_{k}^3$ explicitly it is pointed out in \cite{Backofen2007}, that a linearization of this non-linear term stabilizes the system and allows for larger timestep widths. Therefore, we replace $(\psi_{k}^3,\vartheta_h)_\Omega$ by $(3\psi_{k}^2\psi_{k+1} - 2\psi_k^3,\vartheta_h)_\Omega$. Thus \eqref{eq:time_discrete_pfc} reads
\begin{equation}\label{eq:time_discrete_pfc_2}\begin{split} 
a\big((\mu_{k+1}, \psi_{k+1},\omega_{k+1}) &, (\vartheta_h, \vartheta'_h, \vartheta''_h)\big) := \\
    (\mu_{k+1} - (1+r)\psi_{k+1} &- 3\psi_{k}^2\psi_{k+1},\vartheta_h)_\Omega + (2\nabla\psi_{k+1} + \nabla\omega_{k+1}, \nabla\vartheta_h)_\Omega \\
	&+ \big(\psi_{k+1},\vartheta'_h\big)_\Omega + (\tau_k \psi_{k}^+\nabla\mu_{k+1}, \nabla\vartheta'_h)_\Omega \\
	&+ (\omega_{k+1}, \vartheta''_h)_\Omega + (\nabla\psi_{k+1}, \nabla\vartheta''_h)_\Omega \\
	=(-2\psi_{k}^3,\vartheta_h)_\Omega &+ \big(\psi_{k},\vartheta'_h\big)_\Omega =: \big\langle F, (\vartheta_h, \vartheta'_h, \vartheta''_h)\big\rangle\qquad \forall\vartheta_h, \vartheta'_h, \vartheta''_h\in V_h,
\end{split}\end{equation}

Let $\{\varphi_i\}$ be a basis of $V_h$, than we can define the system matrix $\mathbf{A}$ and the right-hand side vector $\mathbf{b}$, for the linear system $\mathbf{A}\mathbf{x}=\mathbf{b}$, as
\[\mathbf{A} = \begin{bmatrix}
\mathbf{A}^{00} & \mathbf{A}^{01} & \mathbf{A}^{02} \\
\mathbf{A}^{10} & \mathbf{A}^{11} & \mathbf{A}^{12} \\
\mathbf{A}^{20} & \mathbf{A}^{21} & \mathbf{A}^{22} \\
\end{bmatrix},\quad \mathbf{b} = \begin{pmatrix} \mathbf{b}^{0} \\ \mathbf{b}^{1} \\ \mathbf{b}^{2}\end{pmatrix},\]
with each block defined via
\[[\mathbf{A}^{ij}]_{kl} = a\big(\mathbf{e}_j \varphi_l, \mathbf{e}_i \varphi_k\big),\quad [\mathbf{b}^i]_j = \big\langle F, \mathbf{e}_i\varphi_j\big\rangle,\]
where $\mathbf{e}_i$ is the $i$-th Cartesian unit vector. 

Introducing the short cuts $\mathbf{M}:=\big((\varphi_j,\varphi_i)_\Omega\big)_{ij}$ and $\mathbf{K}:=\big((\nabla\varphi_j,\nabla\varphi_i)_\Omega\big)_{ij}$ for mass- and stiffness-matrix,  $\mathbf{K}_+(\psi):=\big((\psi^+\nabla\varphi_j,\nabla\varphi_i)_\Omega\big)_{ij}$ for the mobility matrix and for the non-linear term the short cut $\mathbf{N}(\psi):=\big((-3\psi^2\varphi_j,\varphi_i)_\Omega\big)_{ij}$, we can write $\mathbf{A}$ as
\begin{equation}\label{eq:systemmatrix}
\mathbf{A} = \begin{bmatrix}
 \mathbf{M} & -(1+r)\mathbf{M} + \mathbf{N}(\psi_k) + 2\mathbf{K} & \mathbf{K} \\
 \tau_k\mathbf{K}_+(\psi_k) & \mathbf{M} & 0 \\
0 & \mathbf{K} & \mathbf{M} \\
\end{bmatrix}.
\end{equation}
We also find that $\mathbf{b}^0=\big((-2\psi_{k}^3,\varphi_j)_\Omega\big)_j$, $\mathbf{b}^1=\mathbf{M}\underline{\psi}_k$ and $\mathbf{b}^2=0$. Using this, we can define a new matrix $\mathbf{B}:=\mathbf{K}\mathbf{M}^{-1}\mathbf{K}$ to decouple the first two equations from the last equation, i.e.
\begin{equation}\label{eq:systemmatrix2}
\mathbf{A}' = \begin{bmatrix}
 \mathbf{M} & -(1+r)\mathbf{M} + \mathbf{N}(\psi_k) + 2\mathbf{K} - \mathbf{B} \\
 \tau_k\mathbf{K}_+(\psi_k) & \mathbf{M}
\end{bmatrix}.
\end{equation}
With $\mathbf{x}=(\underline{\mu}_{k+1},\underline{\psi}_{k+1},\underline{\omega}_{k+1})^\top, \mathbf{x}'=(\underline{\mu}_{k+1},\underline{\psi}_{k+1})^\top, \mathbf{b}'=(\mathbf{b}^0,\mathbf{b}^1)^\top$, where the discrete coefficient vectors correspond to a discretization with the same basis functions as the matrices, i.e.
\[\psi_h = \sum_i\psi_{(i)}\varphi_i\;\text{ with coefficients }\;\underline{\psi} = (\psi_{(i)})_i,\]
and $\underline{\mu}, \underline{\omega}$ in a same manner, we have
\[\text{\eqref{eq:time_discrete_pfc_2}}\; \Leftrightarrow\; \mathbf{A}\mathbf{x}=\mathbf{b}\; \Leftrightarrow \;\mathbf{A}' \mathbf{x}' = \mathbf{b}',\, \mathbf{M}\underline{\omega}_{k+1} = -\mathbf{K}\underline{\psi}_{k+1}.\]
The reduced system can be seen as a discretization of a partial differential equation including the Bi-Laplacian, i.e.
\[\partial_t\psi = \nabla\cdot (\psi^+\nabla\mu),\quad\text{with}\quad \mu = \psi^3+(1+r)\psi+2\Delta\psi+\Delta^2\psi.\]
In the following, we will drop the underscore for the coefficient vectors for ease of reading.

% ---------------------------------------------------------------------------------------------------
\subsection{Rosenbrock time-discretization}\label{seq:rosenbrock}
To obtain a time discretization with high accuracy and stability with an easy step size control, we replace the discretization \eqref{eq:time_discrete_pfc}, respective \eqref{eq:time_discrete_pfc_2}, by an embedded Rosenbrock time-discretization scheme, see e.g. \cite{Hairer1993II,Lang2010,Rang2005,John2006, John2010}.

Therefore consider the abstract general form of a differential algebraic equation
\[\mathbb{M}\partial_t \mathbf{x} = \mathbb{F}[\mathbf{x}],\]
with a linear (mass-)operator $\mathbb{M}$ and a (non-linear) differential operator $\mathbb{F}$. Using the notation $\mathbb{J}_\mathbb{F}(\mathbf{x})[\mathbf{y}]:=\frac{\text{d}}{\text{d}\epsilon}\mathbb{F}[\mathbf{x}+\epsilon \mathbf{y}]\big|_{\epsilon=0}$ for the G\^{a}teaux derivative of $\mathbb{F}$ at $\mathbf{x}$ in the direction $\mathbf{y}$, one can write down a general Rosenbrock scheme
\begin{gather}\label{eq:rosenbrock}
\frac{1}{\tau_k\gamma} \mathbb{M}\,\mathbf{y}_i^k -\mathbb{J}_\mathbb{F}(\mathbf{x}^k)[\mathbf{y}_i^k] = \mathbb{F}[\mathbf{x}_{i}^k]+\sum_{j=1}^{i-1}\frac{c_{ij}}{\tau_k}\mathbb{M}\,\mathbf{y}_j^k  \\
\qquad\text{ for } i=1,\ldots,s \nonumber\\
\label{eq:rosenbrock-update}\begin{split}
\mathbf{x}_{i}^k &= \mathbf{x}^k+\sum_{j=1}^{i-1} a_{ij}\mathbf{y}_j^k \quad (i^\text{th}\text{ stage solution}) \\
\mathbf{x}^{k+1} &= \mathbf{x}^k + \sum_{j=1}^s m_j \mathbf{y}_j^k,\quad \mathbf{\hat{x}}^{k+1} = \mathbf{x}^k + \sum_{j=1}^s \hat{m}_j \mathbf{y}_j^k,
\end{split}\end{gather}
with coefficients $\gamma,a_{ij},c_{ij},m_i,\hat{m}_i$ and timestep $\tau_k$. The coefficients $m_i$ and $\hat{m}_i$ build up linear-combinations of the intermediate solutions of two different orders. This can be used in order to estimate the timestep error and control the timestep width. Details about stepsize control can be found in \cite{Hairer1993II, Lang2010}. The coefficients used for the PFC equation are based on the \textit{Ros3Pw} scheme \cite{Rang2005} and are listed in Table \ref{tbl:rosenbrock_scheme}. This W-method has 3 internal stages, i.e. $s=3$, and is strongly A-stable. As Rosenbrock-method it is of order 3. It avoids order reduction when applied to semidiscretized parabolic PDEs and is thus applicable to our equations.

\begin{table}[ht]
\begin{center}
\begin{tabular}{|l|l|}
\hline 
$\gamma = 0.78867513459481287$ & $c_{11} = -c_{22} = \gamma$ \\ 
\cline{1-1}
$a_{21} = 2$ & $c_{21} = -2.53589838486225$ \\
$a_{22} = 1.57735026918963$ & $c_{31} = -1.62740473580836$\\
$a_{31} = 0.633974596215561$ & $c_{32} = -0.274519052838329$ \\
$a_{33} = 0.5$ & $c_{33} = -0.0528312163512967$ \\
 \hline
$m_1 = 1.63397459621556$ & $\hat{m}_1 = 1.99444650053487$ \\ 
$m_2 = 0.294228634059948$ & $\hat{m}_2 = 0.654700538379252$ \\ 
$m_3 = 1.07179676972449$ & $\hat{m}_3 = m_3$ \\ 
\hline 
\end{tabular} 
\end{center}
\caption{A set of coefficients for the \textit{Ros3Pw} Rosenbrock scheme translated into the modified form of the Rosenbrock method used in \eqref{eq:rosenbrock}. All coefficients not given explicitly are set to zero.}\label{tbl:rosenbrock_scheme}
\end{table}

In case of the PFC system \eqref{eq:pfc-equation} we have $\mathbf{x}=(\mu,\psi,\omega)^\top$ and $\mathbb{M}=\operatorname{diag}(0,1,0)$. The functional $\mathbb{F}$ applied to $\mathbf{x}$ is given by %,\hat{m}_i
\begin{equation}\label{eq:non-linear-operator}
\mathbb{F}[\mathbf{x}] = \underbrace{\begin{bmatrix}
-\mu + (1+r)\psi + 2\Delta\psi + \Delta\omega \\
0 \\
-\omega + \Delta\psi
\end{bmatrix}}_{\mathbb{F}_\text{Lin}[\mathbf{x}]} + \begin{bmatrix}
\psi^3 \\ \nabla\cdot (\psi^+\nabla\mu) \\   0
\end{bmatrix}.\end{equation}
For the Jacobian of $\mathbb{F}$ in the direction $\mathbf{y}=(d\mu, d\psi, d\omega)^\top$ we find
\begin{align*}
\mathbb{J}_\mathbb{F}(\mathbf{x})[\mathbf{y}] &= \mathbb{F}_\text{Lin}[\mathbf{y}] + \begin{bmatrix}
3\psi^2d\psi \\ \nabla\cdot (\psi^+\nabla d\mu) + \nabla\cdot (d\psi\partial_\psi(\psi^+)\nabla\mu) \\ 0
\end{bmatrix} \\
& (\text{assuming }\psi^+=\psi-\psi_{\min}) \\
&= \mathbb{F}_\text{Lin}[\mathbf{y}] + \begin{bmatrix}
3\psi^2d\psi \\ \nabla\cdot (\psi^+\nabla d\mu) + \nabla\cdot (d\psi\nabla\mu) \\ 0
\end{bmatrix}.
\end{align*}
By multiplication with test functions $\boldsymbol{\vartheta}=(\vartheta,\vartheta',\vartheta'')^\top$ and integration over $\Omega$, we can derive a weak form of equation \eqref{eq:rosenbrock}: 

For $i=1,\ldots,s$ find $\mathbf{y}_i^k\in \big(L_2(0,T;\,V_h)\big)^3$, s.t.
\begin{equation}\label{eq:rosenbrock_weak}
\frac{1}{\tau_k\gamma} (\mathbb{M}\,\mathbf{y}_i^k,\boldsymbol{\vartheta})_\Omega -\mathbb{J}_\mathbb{F}(\mathbf{x}^k)[\mathbf{y}_i^k,\boldsymbol{\vartheta}] = \mathbb{F}(\mathbf{x}_{i}^k)[\boldsymbol{\vartheta}]+\sum_{j=1}^{i-1}\frac{c_{ij}}{\tau_k}(\mathbb{M}\,\mathbf{y}_j^k, \boldsymbol{\vartheta})_\Omega\qquad\forall\boldsymbol{\vartheta}\in(V_h)^3,
\end{equation}
with the linear form $\mathbb{F}(\cdot)[\cdot]$:
\[\begin{split}
\mathbb{F}(\mathbf{x})[\boldsymbol{\vartheta}] &= \Big[(-\mu + (1+r)\psi,\vartheta)_\Omega - (2\nabla\psi+\nabla\omega,\nabla\vartheta)_\Omega - (\omega,\vartheta'')_\Omega - (\nabla\psi,\nabla\vartheta'')_\Omega\Big]\\
	&\quad  + (\psi^3,\vartheta)_\Omega -(\psi^+\nabla\mu,\nabla\vartheta')_\Omega \\
	&=: \mathbb{F}_\text{Lin}(\mathbf{x})[\boldsymbol{\vartheta}]  + (\psi^3,\vartheta)_\Omega -(\psi^+\nabla\mu,\nabla\vartheta')_\Omega
\end{split}\]
and the bi-linear form $\mathbb{J}_\mathbb{F}(\cdot)[\cdot, \cdot]$:
\[\mathbb{J}_\mathbb{F}(\mathbf{x})[\mathbf{y},\boldsymbol{\vartheta}] = \mathbb{F}_\text{Lin}(\mathbf{y})[\boldsymbol{\vartheta}] + (3\psi^2d\psi,\vartheta)_\Omega - (\psi^+\nabla d\mu + d\psi\nabla\mu,\nabla\vartheta')_\Omega.\]

Using the definitions of the elementary matrices $\mathbf{M}, \mathbf{K}, \mathbf{K}_+$ and $\mathbf{N}$, as above and introducing $\mathbf{F}(\mu):=\big((\varphi_j\nabla\mu,\nabla\varphi_i)_\Omega\big)_{ij}$, we can write the Rosenbrock discretization in matrix form for the $i$-th stage iteration:
\begin{equation}\label{eq:rosenbrock_matrix}\underbrace{\begin{bmatrix}
 \mathbf{M} & -(1+r)\mathbf{M} + 2\mathbf{K} + \mathbf{N}(\psi_k) & \mathbf{K} \\
 \tau_k\mathbf{K}_+(\psi_k) & \frac{1}{\gamma}\mathbf{M} + \tau_k\mathbf{F}(\mu_k) & 0 \\
0 & \mathbf{K} & \mathbf{M} \\
\end{bmatrix}}_{\mathbf{A}^R}\mathbf{y}_i^k = \mathbf{b}_i^R,\end{equation}
with $\mathbf{b}_i^R$ the assembling of the right-hand side of \eqref{eq:rosenbrock_weak}, with a factor $\tau_k$ multiplied to the second component. The system matrix $\mathbf{A}^R$ in each stage of one Rosenbrock time iteration is very similar to the matrix derived for the simple backward Euler discretization in \eqref{eq:systemmatrix}, up to a factor $\frac{1}{\gamma}$ in front of a mass matrix and the derivative of the mobility term $\mathbf{F}$. The latter can be simplified in case of the PFC2 model \eqref{eq:H-1-gradientflow}, where $\mathbf{F}=0$ and $\mathbf{K}_+=\mathbf{K}$.

% ===========================================================================
% ===========================================================================
\section{Precondition the linear systems}\label{seq:preconditioner}
To solve the linear system $\mathbf{A}\mathbf{x}=\mathbf{b}$, respective $\mathbf{A}^R\mathbf{y}=\mathbf{b}^R$, linear solvers must be applied. As direct solvers, like UMFPACK \cite{Davis2004}, MUMPS \cite{MUMPS} or SuplerLU\_DIST \cite{SuperLU} suffer from fast increase of memory requirements and bad scaling properties for massively parallel problems,  iterative solution methods are required. The system matrix $\mathbf{A}$, respective $\mathbf{A}^R$, is non-symmetric, non-positive definite and non-normal, which restricts the choice of applicable solvers. We here use a GMRES algorithm \cite{Saad1986}, respectively the flexible variant FGMRES \cite{Saad1993}, to allow for preconditioners with (non-linear) iterative inner solvers, like a CG method. 

Instead of solving the linear system $\mathbf{A} \mathbf{x}=\mathbf{b}$, we consider the modified system $\mathbf{A} \mathbf{P}^{-1} (\mathbf{P} \mathbf{x}) = \mathbf{b}$, i.e. a right preconditioning of the matrix $\mathbf{A}$. A natural requirement for the preconditioner $\mathbf{P}$ is, that it should be simple and fast to solve $\mathbf{P}^{-1} \mathbf{v}$ for arbitrary vectors $\mathbf{v}$, since it is applied to the Krylov basisvectors in each iteration of the (F)GMRES method.

We propose a block-preconditioner $\mathbf{P}$ for the $2\times 2$ upper left block matrix $\mathbf{A}'$ of $\mathbf{A}$ based on an approach similar to a preconditioner developed for the Cahn-Hilliard equation \cite{Boyanova2012}. Therefore, we first simplify the matrix $\mathbf{A}'$, respective the corresponding reduced system ${\mathbf{A}^R}'$ of $\mathbf{A}^R$, by considering a fixed timestep $\tau_k=\tau$ and using a constant mobility approximation, i.e. $\mathbf{K}_+\approx M_0\mathbf{K}$, with $M_0=\langle \psi^+\rangle$ the mean of the mobility coefficient $\psi^+$, and $\mathbf{F}=0$. For simplicity, we develop the preconditioner for the case $M_0=1$ and $\gamma=1$ only. For small timestep widths $\tau$ the semi-implicit Euler time-discretization \eqref{eq:time_discrete_pfc} is a good approximation of \eqref{eq:rosenbrock_matrix}, so we neglect the non-linear term $\mathbf{N}(\psi)$. What remains is the reduced system
\[\mathbf{A}'':=\begin{bmatrix}
 \mathbf{M} & -(1+r)\mathbf{M} + 2\mathbf{K} - \mathbf{B} \\
 \tau\mathbf{K} & \mathbf{M}
\end{bmatrix}.\]
By adding a small perturbation to the diagonal of $\mathbf{A}''$, we can find a matrix having an explicit triangular block-factorization. This matrix we propose as a preconditioner for the original matrix $\mathbf{A}'$:
\begin{equation}\label{eq:pfc-preconditioner}\begin{split}
\!\!\!\mathbf{P}&:=\left[\begin{array}{cc}
\mathbf{M} & 2\mathbf{K} - \mathbf{B} \\
\tau\mathbf{K} & \mathbf{M} - \delta\mathbf{K}+\delta\mathbf{B}
\end{array}\right]
= \left[\begin{array}{cc}
\mathbf{M} & 0\\
\tau\mathbf{K} & \mathbf{M} + \delta\mathbf{K}
\end{array}\right]\left[\begin{array}{cc}
\mathbf{I} & \mathbf{M}^{-1}(2\mathbf{K} - \mathbf{B}) \\
0 & \mathbf{M}^{-1}(\mathbf{M} - 2\delta\mathbf{K}+\delta\mathbf{B})
\end{array}\right]
\end{split}\end{equation}
with $\delta:=\sqrt{\tau}$. In each (F)GMRES iteration, the preconditioner is applied to a vector $(\mathbf{b}_0, \mathbf{b}_1)^\top$, that means solving the linear system $\mathbf{P}\mathbf{x}=\mathbf{b}$, in four steps:
\begin{align*}
(1)\; & \mathbf{M} \mathbf{y}_0 = \mathbf{b}_0 & (2)\; & (\mathbf{M}+\delta\mathbf{K}) \mathbf{y}_1 = \mathbf{b}_1 - \tau\mathbf{K} \mathbf{y}_1 \\
(3)\; & (\mathbf{M} - 2\delta\mathbf{K}+\delta\mathbf{B}) \mathbf{x}_1=\mathbf{M} \mathbf{y}_1 & (4)\; & \mathbf{x}_0= \mathbf{y}_0+\frac{1}{\delta}(\mathbf{y}_1 - \mathbf{x}_1).
\end{align*}
Since the overall system matrix $\mathbf{A}$ has a third component, that was removed for the construction of the preconditioner, the third component $\mathbf{b}_2$ of the vector has to be preconditioned as well. This can be performed by solving: (5) $\mathbf{M} \mathbf{x}_2=\mathbf{b}_2 - \mathbf{K}\mathbf{x}_1$.

In step (3) we have to solve
\begin{equation}\label{eq:schur_complement_step_3}
\mathbf{S} \mathbf{x}_1 := (\mathbf{M} - 2\delta\mathbf{K}+\delta\mathbf{K}\mathbf{M}^{-1}\mathbf{K}) \mathbf{x}_1=\mathbf{M} \mathbf{y}_1,
\end{equation}
which requires special care, as forming the matrix $\mathbf{S}$ explicitly is no option, as the inverse of the mass matrix $\mathbf{M}$ is dense and thus the matrix $\mathbf{S}$ as well. In the following subsections we give two approximations to solve this problem. 

\subsection{Diagonal approximation of the mass matrix}
Approximating the mass matrix by a diagonal matrix leads to a sparse approximation of $\mathbf{S}$. Using the ansatz $\mathbf{M}^{-1}\approx\mathbf{\operatorname{diag}(M)}^{-1}=:\mathbf{M}_D^{-1}$ the matrix $\mathbf{S}$ can be approximated by
\[\mathbf{S}_D := (\mathbf{M} - 2\delta\mathbf{K}+\delta\mathbf{K}\mathbf{M}_D^{-1}\mathbf{K}).\]
By estimating the eigenvalues of the generalized eigenvalue problem $\lambda \mathbf{S}_D \mathbf{x} = \mathbf{S} \mathbf{x}$ we show, similar as in \cite{Boyanova2012b}, that the proposed matrix is a good approximation. 
\begin{lem}\label{lem:diagonal_mass_approx}
The eigenvalues $\lambda$ of the generalized eigenvalue problem $\lambda \mathbf{S}_D \mathbf{x} = \mathbf{S} \mathbf{x}$ are bounded by bounds of the eigenvalues $\mu$ of the generalized eigenvalue problem $\mu \mathbf{M}_D \mathbf{y} = \mathbf{M} \mathbf{y}$ for mass-matrix and diagonal approximation of the mass-matrix.
\end{lem}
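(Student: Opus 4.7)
The plan is to compare $\mathbf{S}$ and $\mathbf{S}_D$ through a L\"owner-order sandwich inherited from the generalized eigenvalues $\mu$, and then translate that sandwich into bounds on the generalized Rayleigh quotient defining $\lambda$. This mirrors the Cahn--Hilliard argument of \cite{Boyanova2012b} that the authors cite just before the statement.

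First I would record that $\mathbf{M}$, $\mathbf{M}_D$, $\mathbf{K}$, and hence $\mathbf{S}$ and $\mathbf{S}_D$, are symmetric, with $\mathbf{M},\mathbf{M}_D$ SPD. The Courant--Fischer characterization of the extreme eigenvalues $\mu_{\min},\mu_{\max}$ of $\mu\mathbf{M}_D\mathbf{y}=\mathbf{M}\mathbf{y}$ yields the L\"owner sandwich $\mu_{\min}\mathbf{M}_D\preceq\mathbf{M}\preceq\mu_{\max}\mathbf{M}_D$; operator monotonicity of inversion on the SPD cone then gives $\mu_{\max}^{-1}\mathbf{M}_D^{-1}\preceq\mathbf{M}^{-1}\preceq\mu_{\min}^{-1}\mathbf{M}_D^{-1}$. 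Conjugating by the symmetric matrix $\mathbf{K}$ preserves $\preceq$, so
\[\mu_{\max}^{-1}\,\mathbf{K}\mathbf{M}_D^{-1}\mathbf{K}\;\preceq\;\mathbf{K}\mathbf{M}^{-1}\mathbf{K}\;\preceq\;\mu_{\min}^{-1}\,\mathbf{K}\mathbf{M}_D^{-1}\mathbf{K},\]
and because $\mathbf{S}-\mathbf{S}_D=\delta\mathbf{K}(\mathbf{M}^{-1}-\mathbf{M}_D^{-1})\mathbf{K}$ this upgrades to
\[\mathbf{S}_D+\delta(\mu_{\max}^{-1}-1)\mathbf{K}\mathbf{M}_D^{-1}\mathbf{K}\;\preceq\;\mathbf{S}\;\preceq\;\mathbf{S}_D+\delta(\mu_{\min}^{-1}-1)\mathbf{K}\mathbf{M}_D^{-1}\mathbf{K}.\]

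Second, I would read off bounds on $\lambda$ via the generalized Rayleigh quotient $\lambda(\mathbf{x})=(\mathbf{x}^\top\mathbf{S}\mathbf{x})/(\mathbf{x}^\top\mathbf{S}_D\mathbf{x})$. Writing $\lambda=1+\theta(\mathbf{x})\,p(\mathbf{x})$ with
\[\theta(\mathbf{x})=\frac{\mathbf{x}^\top\mathbf{K}(\mathbf{M}^{-1}-\mathbf{M}_D^{-1})\mathbf{K}\mathbf{x}}{\mathbf{x}^\top\mathbf{K}\mathbf{M}_D^{-1}\mathbf{K}\mathbf{x}}\;\in\;[\mu_{\max}^{-1}-1,\,\mu_{\min}^{-1}-1]\]
and $p(\mathbf{x})=\delta(\mathbf{x}^\top\mathbf{K}\mathbf{M}_D^{-1}\mathbf{K}\mathbf{x})/(\mathbf{x}^\top\mathbf{S}_D\mathbf{x})$, a case distinction on the sign of $\theta$ squeezes $\lambda$ into an interval whose endpoints depend only on $\mu_{\min},\mu_{\max}$; in the generic regime $\mu_{\min}<1<\mu_{\max}$ the bound condenses to $\lambda\in[\mu_{\max}^{-1},\mu_{\min}^{-1}]$.

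The main obstacle is controlling the auxiliary factor $p(\mathbf{x})$. Its numerator is a nonnegative quadratic form, but the denominator $\mathbf{x}^\top\mathbf{S}_D\mathbf{x}$ contains the indefinite term $-2\delta\,\mathbf{x}^\top\mathbf{K}\mathbf{x}$ and is not automatically positive. I would address this either by appealing to the explicit block factorization in \eqref{eq:pfc-preconditioner}, which exhibits $\mathbf{S}_D$ as a product with more transparent positivity, or by exploiting the smallness of $\delta=\sqrt{\tau}$, which renders $\mathbf{M}-2\delta\mathbf{K}$ (hence $\mathbf{S}_D$) SPD on the finite-dimensional space $V_h$ and forces $p(\mathbf{x})\in[0,1]$. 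With this in hand, the L\"owner sandwich on $\mathbf{S}$ transfers one-to-one to a sandwich on $\lambda$ whose endpoints are functions of $\mu_{\min}$ and $\mu_{\max}$ alone, as claimed.
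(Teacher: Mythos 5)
Your proof is correct and rests on the same underlying mechanism as the paper's — the only discrepancy between $\mathbf{S}$ and $\mathbf{S}_D$ sits in $\delta\mathbf{K}\mathbf{M}^{-1}\mathbf{K}$ versus $\delta\mathbf{K}\mathbf{M}_D^{-1}\mathbf{K}$, whose relative size is governed by the pencil $(\mathbf{M},\mathbf{M}_D)$ — but the execution differs and is in two respects tighter. The paper conjugates with $\mathbf{M}^{\pm 1/2}$ and collapses everything to the scalar identity $\lambda(1-2\delta k+\delta k^2 d)=1-2\delta k+\delta k^2$ with $k,d$ Rayleigh quotients of $\widehat{\mathbf{K}},\widehat{\mathbf{D}}$ taken at the \emph{same} vector $\mathbf{y}$; strictly $\mathbf{y}^\top\widehat{\mathbf{K}}\widehat{\mathbf{D}}\widehat{\mathbf{K}}\mathbf{y}\neq k^2 d$ and $\mathbf{y}^\top\widehat{\mathbf{K}}^2\mathbf{y}\neq k^2$ (the $\widehat{\mathbf{D}}$-quotient must be evaluated at $\widehat{\mathbf{K}}\mathbf{y}$), so the paper's scalar reduction needs exactly the repair that your L\"owner sandwich performs automatically; your congruence/inversion argument is the rigorous version of "the difference is the factor $d$". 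Your decomposition $\lambda=1+\theta p$ with $\theta\in[\mu_{\max}^{-1}-1,\mu_{\min}^{-1}-1]$ and $p\in[0,1]$ then yields $\lambda\in[\min(1,\mu_{\max}^{-1}),\max(1,\mu_{\min}^{-1})]$, which is the correctly oriented (reciprocal) form of the interval the paper quotes after the lemma; for Wathen's nearly reciprocal-symmetric bounds this is numerically almost the same, but yours is the right statement. The caveat you flag — positivity of $\mathbf{x}^\top\mathbf{S}_D\mathbf{x}$, equivalently of $\mathbf{x}^\top(\mathbf{M}-2\delta\mathbf{K})\mathbf{x}$, without which the pencil need not even have real eigenvalues — is genuine, and the paper does not resolve it either (it only assumes $1-2\delta k+\delta k^2 d\neq 0$, and its final interval claim likewise needs $1-2\delta k\geq 0$). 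Your small-$\delta$ remedy works but is mesh-dependent: it requires $2\delta\,\lambda_{\max}(\mathbf{M}^{-1}\mathbf{K})<1$, i.e.\ roughly $\delta\lesssim h^2$, which is a nontrivial restriction given $\delta=\sqrt{\tau}$ and is consistent with the critical timestep $\tau^\natural$ the authors later observe; you should state that hypothesis explicitly rather than leaving it as an either/or, but with it in place your argument is complete and, if anything, more careful than the original.
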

\begin{proof} We follow the argumentation of \cite[Section 3.2]{Boyanova2012b}.

Using the matrices $\widehat{\mathbf{D}}:=\mathbf{M}^{\frac{1}{2}}\mathbf{M}_D^{-1}\mathbf{M}^{\frac{1}{2}}$ and $\widehat{\mathbf{K}}:=\mathbf{M}^{-\frac{1}{2}}\mathbf{K}\mathbf{M}^{-\frac{1}{2}}$ we can reformulate the eigenvalue problem $\lambda \mathbf{S}_D \mathbf{x} = \mathbf{S} \mathbf{x}$ as
\begin{equation}\label{eq:eigenvalue_problem_diagonal}
\lambda\mathbf{M}^{\frac{1}{2}}(\mathbf{I} - 2\delta\widehat{\mathbf{K}}+\delta\widehat{\mathbf{K}}\widehat{\mathbf{D}}\widehat{\mathbf{K}})\mathbf{M}^{\frac{1}{2}} \mathbf{x} = \mathbf{M}^{\frac{1}{2}}(\mathbf{I}-2\delta\widehat{\mathbf{K}}+\delta\widehat{\mathbf{K}}\widehat{\mathbf{K}})\mathbf{M}^{\frac{1}{2}} \mathbf{x}.
\end{equation}
Multiplying from the left with $\mathbf{x}^\top$, dividing by $\|\mathbf{M}^{\frac{1}{2}} \mathbf{x}\|^2$ and defining the normalized vector $\mathbf{y}:=\mathbf{M}^{\frac{1}{2}}\mathbf{x} / \|\mathbf{M}^{\frac{1}{2}}\mathbf{x}\|$ results in a scalar equation for $\lambda$:
\[\lambda(1-2\delta k + \delta k^2 d) = 1-2\delta k + \delta k^2,\]
with the Rayleigh quotients $k=\mathbf{y}^\top\widehat{\mathbf{K}}\mathbf{y} / (\mathbf{y}^\top \mathbf{y})$ and $d=\mathbf{y}^\top\widehat{\mathbf{D}}\mathbf{y} / (\mathbf{y}^\top \mathbf{y})$. Assuming that $(1-2\delta k + \delta k^2 d) \neq 0$ we arrive at
\[\lambda = \frac{1-2\delta k + \delta k^2}{1-2\delta k + \delta k^2 d},\]
where the difference in the highest order terms of the rational function is the factor $d$. From the definition of $d$ and $\widehat{\mathbf{D}}$, bounds are given by the bounds of the eigenvalues of $\mu\mathbf{M}_D v = \mathbf{M}v$. 
\end{proof}

In \cite{Wathen1987} concrete values are provided for linear and quadratic Lagrangian finite elements on triangles and linear Lagrangian elements on tetrahedra. For the latter, the bound $d\in[0.3924, 2.5]$ translates directly to the bound for $\lambda$, i.e. $\lambda\in[0.3924, 2.5]$, and thus $\mathbf{S}_D$ provides a reasonable approximation of $\mathbf{S}$.

\begin{rem}
Other diagonal approximations based on lumped mass matrices could also be used, which however would lead to different eigenvalue bounds.
\end{rem}

\subsection{Relation to a Cahn-Hilliard system}
An alternative to the diagonal approximation can be achieved by using the similarity of step (3) in the preconditioning with the discretization of a Cahn-Hilliard equation \cite{Cahn1958,Boyanova2012}. This equation can be written using higher order derivatives:
\[\partial_t c = \Delta(c^3)-\Delta c - \eta\Delta^2 c\]
with $\eta$ a parameter related to the interface thickness. For an Euler discretization in time with timestep width $\tau'$ and finite element discretization in space as above, we find the discrete equation
\[\big(\mathbf{M} - \tau'\mathbf{K}+\tau'\eta\mathbf{B}-\tau'\mathbf{N}'(c_k)\big)c_{k+1} = \mathbf{M}c_k.\]
Setting $\eta:=\frac{1}{2}$ and $\tau':=2\delta$, and neglecting the Jacobian operator $\mathbf{N}'$ we recover equation \eqref{eq:schur_complement_step_3}. A preconditioner for the Cahn-Hilliard equation, see \cite{Boyanova2012,Boyanova2012b,Axelsson2013} thus might help to solve the equation in step (3), which we rewrite as a block system
\begin{equation}\label{eq:cahn_hilliard_3c}\begin{bmatrix}
\mathbf{M} & \mathbf{M}-\eta\mathbf{K} \\
\tau'\mathbf{K} & \mathbf{M}
\end{bmatrix}\begin{pmatrix}\ast \\ x_1\end{pmatrix} = \begin{pmatrix}0 \\ \mathbf{M} y_1\end{pmatrix}\end{equation}
with Schur complement $\mathbf{S}$. Using the proposed inner preconditioner $\widehat{\mathbf{A}}_0$ of \cite[p.13]{Boyanova2012}:
\[\widehat{\mathbf{A}}_0:=\begin{bmatrix}
\mathbf{M} & -\eta\mathbf{K} \\
\tau'\mathbf{K} & \mathbf{M} + 2\sqrt{\tau' \eta}\mathbf{K}
\end{bmatrix},\]
with Schur complement
%\begin{equation}\label{eq:cahn_hilliard_precon}
$\mathbf{S}_\text{CH} := \mathbf{M} + 2\sqrt{\tau' \eta}\mathbf{K} + \tau' \eta\mathbf{K}\mathbf{M}^{-1}\mathbf{K}$
%\end{equation}
as a direct approximation of equation \eqref{eq:cahn_hilliard_3c}, respective \eqref{eq:schur_complement_step_3}, i.e.
\begin{equation}\label{eq:cahn_hilliard_precon_2}
\mathbf{S}_\text{CH} \mathbf{x}_1 =(\mathbf{M} + 2\sqrt{\delta}\mathbf{K} + \delta\mathbf{K}\mathbf{M}^{-1}\mathbf{K}) \mathbf{x}_1=\mathbf{M} \mathbf{y}_1
\end{equation}
we arrive at a simple two step procedure for step (3):
\begin{equation*}
(3.1) \; (\mathbf{M}+\sqrt{\delta}\mathbf{K}) \mathbf{z} = \mathbf{M} \mathbf{y}_1 \qquad
(3.2) \;(\mathbf{M}+\sqrt{\delta}\mathbf{K}) \mathbf{x}_1 = \mathbf{M} \mathbf{z}.
\end{equation*}
\begin{lem}\label{lem:ch_approx}
The eigenvalues $\lambda$ of the generalized eigenvalue problem $\lambda\mathbf{S}_\text{CH} \mathbf{x} = \mathbf{S} \mathbf{x}$ satisfy $\lambda\in[(1-\sqrt{\delta})/2,1]$.
\end{lem}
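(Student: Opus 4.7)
The plan is to mirror the argument of Lemma \ref{lem:diagonal_mass_approx} by performing a congruence transformation through $\mathbf{M}^{1/2}$ to reduce the generalized eigenvalue problem to a scalar rational function. Introducing $\widehat{\mathbf{K}} := \mathbf{M}^{-1/2}\mathbf{K}\mathbf{M}^{-1/2}$, which is symmetric positive semi-definite, I would first write $\mathbf{S} = \mathbf{M}^{1/2}(\mathbf{I} - 2\delta\widehat{\mathbf{K}} + \delta\widehat{\mathbf{K}}^2)\mathbf{M}^{1/2}$ and then observe the key algebraic simplification on the Cahn--Hilliard side: $\mathbf{S}_\text{CH} = \mathbf{M}^{1/2}(\mathbf{I} + \sqrt{\delta}\widehat{\mathbf{K}})^2\mathbf{M}^{1/2}$. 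The coefficient $2\sqrt{\delta}$ in the definition of $\mathbf{S}_\text{CH}$ was chosen precisely so that this middle factor becomes a perfect square, which also certifies that $\mathbf{S}_\text{CH}$ is symmetric positive definite and legitimises the generalized eigenvalue formulation.

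Because both middle factors are polynomials in $\widehat{\mathbf{K}}$, they are simultaneously diagonalizable, and passing to the spectral basis of $\widehat{\mathbf{K}}$ the problem $\lambda\mathbf{S}_\text{CH}\mathbf{x} = \mathbf{S}\mathbf{x}$ decouples into scalar equations parameterized by the eigenvalues $\mu \geq 0$ of $\widehat{\mathbf{K}}$,
\[\lambda(\mu) = \frac{1 - 2\delta\mu + \delta\mu^2}{(1 + \sqrt{\delta}\mu)^2}.\]
The upper bound $\lambda \leq 1$ is immediate from $-2\delta\mu \leq 2\sqrt{\delta}\mu$ for $\mu \geq 0$ and $\delta > 0$. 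For the lower bound I would aim to verify the identity
\[2\bigl(1 - 2\delta\mu + \delta\mu^2\bigr) - (1-\sqrt{\delta})(1+\sqrt{\delta}\mu)^2 = (1+\sqrt{\delta})(1-\sqrt{\delta}\mu)^2,\]
whose right-hand side is manifestly non-negative; dividing through by $2(1+\sqrt{\delta}\mu)^2 > 0$ yields $\lambda(\mu) \geq (1-\sqrt{\delta})/2$, with sharpness at $\mu = 1/\sqrt{\delta}$.

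The only real obstacle is purely algebraic: spotting that the quantity numerator-minus-constant-times-denominator collapses to the clean square $(1+\sqrt{\delta})(1-\sqrt{\delta}\mu)^2$, which is what forces the specific constant $(1-\sqrt{\delta})/2$ to appear in the statement. Once that factorization is in hand, the congruence reduction, the simultaneous diagonalization, and the sign check of the upper bound are all routine bookkeeping.
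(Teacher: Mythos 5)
Your proposal is correct and follows essentially the same route as the paper: both reduce the generalized eigenvalue problem to the scalar rational function $\lambda(\mu) = (1-2\delta\mu+\delta\mu^2)/(1+\sqrt{\delta}\mu)^2$ over the non-negative eigenvalues $\mu$ of $\mathbf{M}^{-1}\mathbf{K}$ (your congruence by $\mathbf{M}^{1/2}$ versus the paper's similarity transform being interchangeable), and then bound that function. Your sum-of-squares identity $2(1-2\delta\mu+\delta\mu^2)-(1-\sqrt{\delta})(1+\sqrt{\delta}\mu)^2=(1+\sqrt{\delta})(1-\sqrt{\delta}\mu)^2$ checks out and in fact makes the lower bound more explicit than the paper's brief appeal to ``algebraic arguments,'' correctly locating the minimum at $\mu=1/\sqrt{\delta}$.
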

\begin{proof}
We follow the proof of \cite[Theorem 4]{Pearson2012} and denote by $\lambda$ the eigenvalue of $\mathbf{S}_\text{CH}^{-1}\mathbf{S}$ with the corresponding eigenvector $\mathbf{x}$. We have $\mathbf{M}$ symmetric and positive definite and hence $\mathbf{I}+\sqrt{\delta}\mathbf{M}^{-1}\mathbf{K}$ positive definite and thus invertible.
\begin{align*}
&&\mathbf{S}_\text{CH}^{-1}\mathbf{S} \mathbf{x} &= \lambda \mathbf{x}\\
\Rightarrow&& (\mathbf{M} + 2\sqrt{\delta}\mathbf{K}+\delta\mathbf{K}\mathbf{M}^{-1}\mathbf{K})^{-1}(\mathbf{M} - 2\delta\mathbf{K}+\delta\mathbf{K}\mathbf{M}^{-1}\mathbf{K})\mathbf{x} &= \lambda \mathbf{x} \\
\Rightarrow&& (\mathbf{I} + \sqrt{\delta}\mathbf{M}^{-1}\mathbf{K})^{-2}\big(\mathbf{I} - 2\delta\mathbf{M}^{-1}\mathbf{K}+ \delta(\mathbf{M}^{-1}\mathbf{K})^2\big)\mathbf{x} &= \lambda\mathbf{x}.
\end{align*}
Thus, for each eigenvalue $\mu$ of $\mathbf{M}^{-1}\mathbf{K}$ we have $\mu\in\mathbb{R}_{\geq 0}$ and \[\lambda(\mu):=(\mu^2 + 2 \delta\mu + \delta)(\mu+\sqrt{\delta})^{-2}\] an eigenvalue of $\mathbf{S}_\text{CH}^{-1}\mathbf{S}$ and since $\mathbf{M}^{-1}\mathbf{K}$ is similar to $\mathbf{M}^{1/2}\mathbf{M}^{-1}\mathbf{K}\mathbf{M}^{-1/2}$ that is symmetric, all eigenvalues are determined.

With algebraic arguments and $\sqrt{\delta}>0$ we find
\[\lambda(\mu) \leq \frac{\mu^2 + (\sqrt{\delta})^2}{(\mu+\sqrt{\delta})^2}\leq 1\]
and $\nabla\lambda = 0 $ for $\mu,\delta\searrow 0$. This leads to the lower bound $\frac{1-\sqrt{\delta}}{2} \leq \lambda(\mu)$.
\end{proof}

With this Lemma we can argue that $\mathbf{S}_\text{CH}$ provides a good approximation of $\mathbf{S}$ at least for small timestep widths $\tau = \delta^2 < 1$.

We can write the matrix $\mathbf{P}=\mathbf{P}(\mathbf{S})$ in terms of the Schur complement matrix $\mathbf{S}$:
\begin{equation}\label{eq:preconditioner_S}\mathbf{P}(\mathbf{S})=\left[\begin{array}{cc}
\mathbf{M} & \delta^{-1}(\mathbf{M} - \mathbf{S}) \\
\tau\mathbf{K} & \delta\mathbf{K}+\mathbf{S}
\end{array}\right].\end{equation}
Inserting $\mathbf{S}_\text{CH}$ instead of $\mathbf{S}$ gives the precondition-matrix for the Cahn-Hilliard approximation
\[\mathbf{P}_\text{CH} := \mathbf{P}(\mathbf{S}_\text{CH}) = \left[\begin{array}{cc}
\mathbf{M} & -2\sqrt{\delta}\mathbf{K} - \mathbf{B} \\
\tau\mathbf{K} & \mathbf{M} + (\delta+2\sqrt{\delta})\mathbf{K}+\delta\mathbf{B}
\end{array}\right].\]

% ===========================================================================
% ===========================================================================
\section{Convergence analysis of the Krylov-subspace method}\label{seq:convergence_analysis}
To analyze the proposed preconditioners for the GMRES algorithm, we have a look at the norm of the residuals $\mathbf{r}_k(\mathbf{A})=\mathbf{b}-\mathbf{A}\mathbf{x}_k$ of the approximate solution $\mathbf{x}_k$ obtained in the $k$-th step of the GMRES algorithm. In our studies, we are interested in estimates of the residual norm of the form
\begin{equation}\label{eq:residual_estimate}
\frac{\|\mathbf{r}_k\|_2}{\|\mathbf{r}_0\|_2}=\min_{p\in\Pi_k}\frac{\|p(\mathbf{A})\mathbf{r}_0\|_2}{\|\mathbf{r}_0\|_2} \leq \min_{p\in\Pi_k}\|p(\mathbf{A})\|_2
\end{equation}
with $\Pi_k:=\{p\in\mathbb{P}_k\,:\,p(0)=1\}$ and $\mathbf{r}_0$ the initial residual. The right-hand side corresponds to an ideal-GMRES bound that excludes the influence of the initial residual. In order to get an idea of the convergence behavior, we have to estimate / approximate the right-hand side term by values that are attainable by analysis of $\mathbf{A}$. Replacing $\mathbf{A}$ by $\mathbf{A}\mathbf{P}^{-1}$ we hope to get an improvement in the residuals.

A lower bound for the right-hand side of \eqref{eq:residual_estimate} can be found by using the spectral mapping theorem $p(\sigma(\mathbf{A}))=\sigma(p(\mathbf{A}))$, as
\begin{equation}\label{eq:lower_bound_S}\min_{p\in\Pi_k}\max_{\lambda\in\sigma(\mathbf{A})} |p(\lambda)| \leq \min_{p\in\Pi_k}\|p(\mathbf{A})\|_2, \end{equation}
see \cite{Trefethen2005, Driscoll1998} and an upper bound can be stated by finding a set $\mathcal{S}(\mathbf{A})\subset\mathbb{C}$ associated with $\mathbf{A}$, so that
\begin{equation}\label{eq:upper_bound_S}\min_{p\in\Pi_k}\|p(\mathbf{A})\|_2\leq C \min_{p\in\Pi_k}\max_{\lambda\in\mathcal{S}(\mathbf{A})} |p(\lambda)|, \end{equation}
where $C$ is a constant that depends on the condition number of the eigenvector matrix, the $\epsilon$-pseudospectra of $\mathbf{A}$ respective on the fields of values of $\mathbf{A}$. 

Both estimates contain the min-max value of $p(\lambda)$. In \cite{Ransford1995, Driscoll1998} it is shown, that the limit 
\[\lim_{k\rightarrow\infty}\left[\min_{p\in\Pi_k}\max_{\lambda\in\mathcal{S}} |p(\lambda)|\right]^{1/k} =: \rho_\mathcal{S}\]
exists, where $\rho_\mathcal{S}$ is called the estimated asymptotic convergence factor related to the set $\mathcal{S}$. Thus, for large $k$ we expect a behavior for the right-hand side of \eqref{eq:residual_estimate} like
\[\rho_{\sigma(\mathbf{A})}^k \lesssim \min_{p\in\Pi_k}\|p(\mathbf{A})\|_2\lesssim C \rho_{\mathcal{S}(\mathbf{A})}^k. \]
The tilde indicates that this estimate only holds in the limit $k\rightarrow\infty$.
%In the following we will analyse all estimates not for the matrix $\mathbf{A}$ directly, but for the corresponding operator, respective the symbol $\mathcal{A}$.

In the next two sections, we will summarize known results on how to obtain the asymptotic convergence factors $\rho_\mathcal{S}$ and the constant $C$ in the approximation of the relative residual bound.

\subsection{The convergence prefactor}\label{sec:convergence_prefactor}
The constant $C$ plays an important role in the case of non-normal matrices, as pointed out by \cite{Embree1999,Trefethen2005}, and can dominate the convergence in the first iterations. It is shown in Section \ref{seq:spectral_analysis} that the linear part of the operator matrix related to $\mathbf{A}$ is non-normal and also the preconditioned operator related to $\mathbf{Q}:=\mathbf{A}\mathbf{P}^{-1}$ is non-normal. Thus, we have to take a look at this constant.

%Let $\mathbf{V}$ be the matrix of normalized right eigenvectors of $\mathbf{A}$ and $\boldsymbol{\Lambda}$ a diagonal matrix containing the eigenvalues on its diagonal. If $\mathbf{A}$ is diagonalizable, i.e. $\mathbf{A}=\mathbf{V}\boldsymbol{\Lambda}\mathbf{V}^{-1}$, then
%\begin{equation}\label{eq:convergence_factor_condition_nr}
%\min_{p\in\Pi_k}\|p(\mathbf{A})\|_2\leq \kappa(\mathbf{V}) \min_{p\in\Pi_k}\max_{\lambda\in\sigma(\mathbf{A})} |p(\lambda)|
%\end{equation}
%holds true, with the condition number $\kappa(\mathbf{V})=\|\mathbf{V}\|_2 \|\mathbf{V}^{-1}\|_2$ of the eigenvector matrix, see e.g. \cite{Embree1999}. Thus the constant is given by $C\equiv \kappa(\mathbf{V})$. If the matrix is not diagonalizable this estimate is not applicable.
%
%As can be seen in Figure \ref{fig:eigenvalues} the convexification of $\sigma(\mathbf{A})$ contains the origin. It follows that 
%\[0\in W(\mathbf{A}):=\left\{\left.\frac{\mathbf{x}^\ast\mathbf{A}\mathbf{x}}{\mathbf{x}^\ast \mathbf{x}}\,\right|\, \mathbf{x}\in\mathbb{C}^3, \mathbf{x}\neq 0\right\},\]
%with $W(\mathbf{A})$ the field of values of $\mathbf{A}$, that is convex and contains $\sigma(\mathbf{A})$. So known convergence bounds related to the field of values can not be used.

An estimate of the convergence constant, applicable for general non-normal matrices, is related to the $\epsilon$-pseudospectrum $\sigma_\epsilon(\mathbf{A})$ of the matrix $\mathbf{A}$. This can be defined by the spectrum of a perturbed matrix \cite{Trefethen2005, Embree1999}
\[\sigma_\epsilon(\mathbf{A}):=\left\{ z\in\mathbb{C}\,\big|\, z\in \sigma(\mathbf{A}+ \mathbf{E}),\;\|\mathbf{E}\|_2\leq\epsilon\right\}.\]

Let $\Gamma_\epsilon := \partial\sigma_\epsilon$ be the boundary of $\sigma_\epsilon$, respective an union of Jordan curves approximating the boundary, then

\begin{equation}\label{eq:upper_bound_min_max}
\min_{p\in\Pi_k}\|p(\mathbf{A})\|_2 \leq \frac{|\Gamma_\epsilon|}{2\pi\epsilon} \min_{p\in\Pi_k}\max_{\lambda\in\Gamma_\epsilon} |p(\lambda)| \leq \frac{|\Gamma_\epsilon|}{2\pi\epsilon}\min_{p\in\Pi_k}\max_{\lambda\in\sigma_\epsilon(\mathbf{A})} |p(\lambda)|,
\end{equation}
and thus $C\equiv \frac{|\Gamma_\epsilon|}{2\pi\epsilon}$, with $|\Gamma_\epsilon|$ the length of the curve $\Gamma_\epsilon$ \cite{Trefethen2005}. This estimate is approximated, using the asymptotic convergence factor for large $k$, by 
\begin{equation}\label{eq:upper_bound_rho}
\min_{p\in\Pi_k}\|p(\mathbf{A})\|_2 \lesssim \frac{|\Gamma_\epsilon|}{2\pi\epsilon} \rho_{\Gamma_\epsilon}^k \leq \frac{|\Gamma_\epsilon|}{2\pi\epsilon} \rho_{\sigma_\epsilon(\mathbf{A})}^k.
\end{equation}

This constant gives a first insight into the convergence behavior of the GMRES method for the PFC matrix $\mathbf{A}$ respective the preconditioned matrix $\mathbf{Q}$.

\subsection{The asymptotic convergence factor}
The asymptotic convergence factor $\rho_\mathcal{S}$, where $\mathcal{S}$ is a set in the complex plane, e.g. $\mathcal{S}=\sigma(\mathbf{A})$, or $\mathcal{S}=\sigma_\epsilon(\mathbf{A})$, can be estimated by means of potential theory \cite{Driscoll1998,Kuijlaars2005}. Therefore, we have to construct a conformal mapping $\Phi:\mathbb{C}\rightarrow\mathbb{C}$ of the exterior of $\mathcal{S}$ to the exterior of the unit disk with $\Phi(\infty)=\infty$. We assume that $\mathcal{S}$ is connected. Otherwise, we will take a slightly larger connected set. Having $\mathcal{S}\subset\mathbb{C}\setminus\{0\}$ the convergence factor is then given by
\begin{equation}\label{eq:convergence_factor_S}
\rho_\mathcal{S}=\frac{1}{|\Phi(0)|}.
\end{equation}
Let $\mathcal{S}=[\alpha,\beta]$ be a real interval with $0<\alpha<\beta$ and $\kappa:=\frac{\beta}{\alpha}$, then a conformal mapping from the exterior of the interval to the exterior of the unit circle is given by
\begin{equation}\label{eq:Phi_interval}\Phi(z)=\frac{2z-\kappa-1-2\sqrt{z^2-(\kappa+1)z+\kappa}}{\kappa-1},\end{equation}
see \cite{Driscoll1998}\footnote{In \cite{Driscoll1998} the sign of the square-root is wrong and thus, the exterior of the interval is mapped to the interior of the unit circle. In formula \eqref{eq:Phi_interval} this has been corrected.}, and gives the asymptotic convergence factor
\begin{equation}\label{eq:estimate_interval}
\rho_{[\alpha,\beta]}=\frac{\sqrt{\kappa}-1}{\sqrt{\kappa}+1},
\end{equation}
that is a well known convergence bound for the CG method for symmetric positive definite matrices with $\kappa=\frac{\lambda_{\max}}{\lambda_{\min}}$ the spectral condition number of the matrix $\mathbf{A}$. In case of non-normal matrices, the value $\kappa$ is not necessarily connected to the matrix condition number.

%By using chebyshev polynomials an exact representation of the min-max value in the case of intervals can be formulated (see \cite{Saad2003}):
%\[\min_{p\in\Pi_k}\max_{\lambda\in\mathcal{S}_0} |p(\lambda)| = 2 \frac{\rho_0^k}{1 + \rho_0^{2k}} < 2 \rho_0^k,\]
%where the difference is of order $\mathcal{O}(\rho_0^{2k})$.

In the next Section, we will apply the given estimates for the asymptotic convergence factor and for the constant $C$ to the Fourier transform of the operators that define the PFC equation, in order to get an estimate of the behavior of the GMRES solver.

% ===========================================================================
% ===========================================================================
\section{Spectral analysis of the preconditioner}\label{seq:spectral_analysis}
We analyze the properties and quality of the proposed preconditioner by means of a Fourier analysis. We therefore consider an unbounded, respective periodic domain $\Omega$ and introduce continuous operators $\mathbb{A}, \mathbb{A}_0$ and $\mathbb{P}$ associated with the linear part $\mathbb{F}_\text{Lin}$ of \eqref{eq:non-linear-operator}, of the linear part of the 6th order non-splitted version of \eqref{eq:H-1-gradientflow} and the preconditioner, for $\psi^+\equiv 1$:
\begin{align*}
\mathbb{A}[\mathbf{x}] &:=\begin{bmatrix}
\mu - (1+r)\psi - 2\Delta\psi - \Delta\omega \\
-\tau\Delta\mu + \psi \\
\omega - \Delta\psi
\end{bmatrix}, \\
\mathbb{A}_0[\psi] &:= \psi-\tau\Delta\big((1+r)\psi + 2\Delta\psi + \Delta^2\psi\big),
\end{align*}
with $\mathbf{x}=(\mu,\psi,\omega)$. The operator, that represents the preconditioner reads
\[\mathbb{P}[\mathbf{x}] := \begin{bmatrix}
\mu - 2\Delta\psi - \Delta^2\psi \\
-\tau\Delta\mu + \psi -\delta\Delta\psi +\delta\Delta^2\psi \\
\omega - \Delta\psi
\end{bmatrix}.\]
Using the representation of $\mathbf{P}(\mathbf{S})$ in \eqref{eq:preconditioner_S}, we can also formulate the operator that determines the Cahn-Hilliard approximation of $\mathbf{P}$ by inserting $\mathbf{S}_\text{CH}$:
\[\mathbb{P}_\text{CH}[\mathbf{x}] := \begin{bmatrix}
\mu + 2\sqrt{\delta}\Delta\psi - \Delta^2\psi \\
-\tau\Delta\mu + \psi +(\delta+2\sqrt{\delta})\Delta\psi +\delta\Delta^2\psi \\
\omega - \Delta\psi
\end{bmatrix}.\]

We denote by $\mathbf{k}=(k_1, k_2, k_3)$ the wave vector with $\mathbf{k}^2=k_1^2+k_2^2+k_3^2$. The Fourier transform of a function $u=u(\mathbf{r})$ will be denoted by $\widehat{u}=\widehat{u}(\mathbf{k})$ and is defined as
\[\mathcal{F}:u(\mathbf{r})\mapsto\widehat{u}(\mathbf{k})=\int_{\mathbb{R}^3} e^{-\text{i}(\mathbf{k}\cdot\mathbf{r})}u(\mathbf{r})\,\text{d}\mathbf{r}.\]
Using the inverse Fourier transform, the operators $\mathbb{A}, \mathbb{A}_0, \mathbb{P}$ and $\mathbb{P}_\text{CH}$ applied to $\mathbf{x}$, respective $\psi$, can be expressed as $\mathbb{A}[\mathbf{x}] = \mathcal{F}^{-1}(\mathcal{A}\widehat{\mathbf{x}}),\; \mathbb{A}_0[\psi] = \mathcal{F}^{-1}(\mathcal{A}_0\widehat{\psi}),\;\mathbb{P}[\mathbf{x}] = \mathcal{F}^{-1}(\mathcal{P}\widehat{\mathbf{x}})\text{ and }\mathbb{P}_\text{CH}[\mathbf{x}] = \mathcal{F}^{-1}(\mathcal{P}_\text{CH}\widehat{\mathbf{x}})$, with $\widehat{\mathbf{x}}=(\widehat{\mu},\widehat{\psi},\widehat{\omega})$ and $\mathcal{A}_0, \mathcal{A}, \mathcal{P}$ and $\mathcal{P}_\text{CH}$ the symbols of $\mathbb{A}_0, \mathbb{A}, \mathbb{P}$ and $\mathbb{P}_\text{CH}$, respectively. These symbols can be written in terms of the wave vector $\mathbf{k}$:
\begin{align}
\mathcal{A}\widehat{\mathbf{x}} &= \begin{bmatrix}
1 & -(1+r) +2\mathbf{k}^2 & \mathbf{k}^2 \\
\tau \mathbf{k}^2 & 1 & 0 \\
0 & \mathbf{k}^2 & 1
\end{bmatrix}\begin{pmatrix}
\widehat{\mu} \\ \widehat{\psi} \\ \widehat{\omega}
\end{pmatrix},\\
\mathcal{A}_0\widehat{\psi} &= \big(1+\tau[(1+r)\mathbf{k}^2-2\mathbf{k}^4+\mathbf{k}^6]\big)\widehat{\psi}, \label{eq:A0} \\
\mathcal{P}\widehat{\mathbf{x}} &= \begin{bmatrix}
1 & 2\mathbf{k}^2 - \mathbf{k}^4 & 0 \\
\tau \mathbf{k}^2 & 1 - \delta\mathbf{k}^2 + \delta\mathbf{k}^4 & 0 \\
0 & \mathbf{k}^2 & 1
\end{bmatrix}\begin{pmatrix}
\widehat{\mu} \\ \widehat{\psi} \\ \widehat{\omega}
\end{pmatrix},\\
\mathcal{P}_\text{CH}\widehat{\mathbf{x}} &= \begin{bmatrix}
1 & -2\sqrt{\delta}\mathbf{k}^2 - \mathbf{k}^4 & 0 \\
\tau \mathbf{k}^2 & 1 + (\delta+2\sqrt{\delta})\mathbf{k}^2 + \delta\mathbf{k}^4 & 0 \\
0 & \mathbf{k}^2 & 1
\end{bmatrix}\begin{pmatrix}
\widehat{\mu} \\ \widehat{\psi} \\ \widehat{\omega}
\end{pmatrix}.
\end{align}
%the bad asymptotic behavior for high frequencies can be found. Since the maximal frequency supported by a numerical grid is of order $\pi/h$, with $h$ the gridsize, we find that $\lambda(\mathbf{k})$ grows as a polynomial of order 6, as $h\rightarrow 0$.

In Figure \ref{fig:eigenvalues} the eigenvalue symbol curves of $\mathcal{A}$ restricted to a bounded range of frequencies, together with the distribution of eigenvalues of an assembled matrix\footnote{Since a finite element mass-matrix has eigenvalues far from the continuous eigenvalue 1, depending on the finite elements used, the grid size and the connectivity of the mesh, the overall spectrum of $\mathbf{A}$ (that contains mass-matrices on the diagonal) is shifted on the real axis. In order to compare the continuous and the discrete spectrum, we have therefore considered the diagonal preconditioned matrix $\widehat{\mathbf{A}}=\operatorname{diag}(\mathbf{A})^{-1}\mathbf{A}$ that is a blockwise diagonal scaling by the inverse of the diagonal of mass-matrices.} $\mathbf{A}$, using quadratic finite elements on a periodic tetrahedral mesh with grid size $h=\pi/4$, is shown.
The qualitative distribution of the eigenvalues is similar for symbol curves and assembled matrices, and changes as the timestep width increases. 

\begin{figure}[ht]
\begin{center}
\begin{tabular}{cc}
\multicolumn{2}{c}{\includegraphics{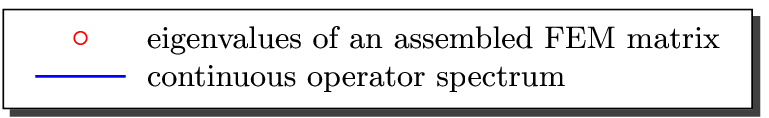}} \\
\includegraphics{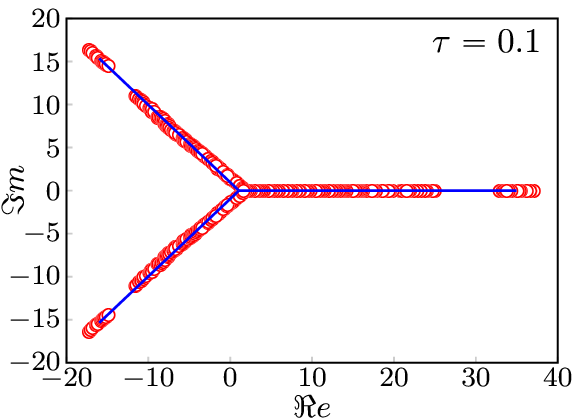} & \includegraphics{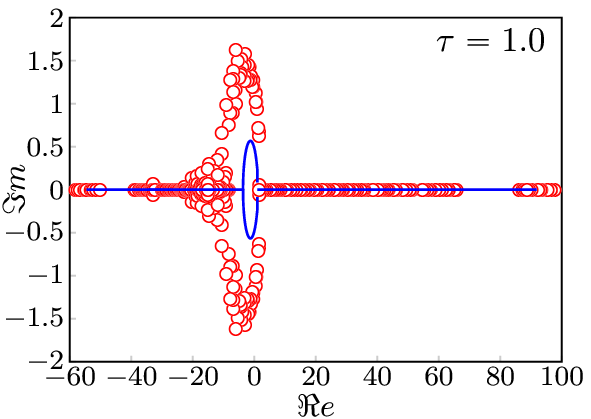} 
\end{tabular} 
\caption{Eigenvalues of the diagonally preconditioned finite element matrix $\widehat{\mathbf{A}}=\operatorname{diag}(\mathbf{A})^{-1}\mathbf{A}$, i.e. a discretization of the continuous operator $\mathbb{A}$ multiplied with the inverse of its diagonal, and the three eigenvalues of the symbol $\mathcal{A}$ visualized as restricted symbol curves. Left: spectrum for timestep width $\tau=0.1$, Right: spectrum for timestep width $\tau=1$}\label{fig:eigenvalues}
\end{center}
\end{figure}

For small $\tau$, the origin is excluded by the Y-shape profile of the spectrum. Increasing $\tau$ leads to a surrounding of the origin. This does not necessarily imply a bad convergence behavior. 

\subsection{Critical timestep width}
For larger timestep widths $\tau$ we can even get the eigenvalue zero in the continuous spectrum, i.e. the time-discretization gets unstable. In the following theorem this is analyzed in detail and a modification is proposed, that shifts this critical timestep width limit. This modification will be used in the rest of the paper

\begin{lem}\label{lem:timestep_stability}
Let $\mathcal{A}_0$ be given as in \eqref{eq:A0} and $r < 0$ then the spectrum $\sigma(\mathcal{A}_0)$ contains zero in case of the critical timestep width
\begin{equation}\label{eq:critical_timestep_width}
\tau \geq \tau^\ast := \frac{27}{2(\sqrt{\alpha} - 1)(\sqrt{\alpha} + 2)^2},
\end{equation}
with $\alpha = 1-3r$. 

Let $\bar{\psi}\in\mathbb{R}$. The spectrum of the modified operator $\hat{\mathcal{A}}_0$, given by
\begin{equation}\label{eq:A0_mod}\hat{\mathcal{A}}_0 := \big(1+\tau[(3\bar{\psi}^2+1+r)\mathbf{k}^2-2\mathbf{k}^4+\mathbf{k}^6]\big),\end{equation}
contains zero only in the case of $\bar{\psi}^2 < \frac{-r}{3}$. Then the critical timestep width is given by \eqref{eq:critical_timestep_width} with $\alpha =1-3r-9\bar{\psi}^2$.
\end{lem}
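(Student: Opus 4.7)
The plan is to reduce the spectral question to a scalar one-variable problem in $x := \mathbf{k}^2 \geq 0$, and then to find the threshold $\tau^\ast$ by minimizing a cubic. Since $\mathcal{A}_0$ is just multiplication by a scalar symbol, $0 \in \sigma(\mathcal{A}_0)$ holds iff the equation
\[
1 + \tau\bigl[(1+r)x - 2x^2 + x^3\bigr] = 0
\]
admits a solution $x \geq 0$. Writing $h(x) := x^3 - 2x^2 + (1+r)x$, this is equivalent to $h(x) = -1/\tau$ having a solution in $[0,\infty)$, which in turn is equivalent to
\[
\min_{x\geq 0} h(x) \leq -\tfrac{1}{\tau}.
\]
Thus the critical timestep is $\tau^\ast = -1/\min_{x\geq 0} h(x)$, provided this minimum is negative.

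First I would compute the critical points from $h'(x) = 3x^2 - 4x + (1+r) = 0$, obtaining
\[
x_{\pm} = \frac{2 \pm \sqrt{1-3r}}{3} = \frac{2\pm\sqrt{\alpha}}{3}.
\]
Since $r < 0$ gives $\alpha > 1$, both roots are real and $x_+ > 0$. The second-derivative test $h''(x_+) = 6x_+ - 4 = 2\sqrt{\alpha} > 0$ identifies $x_+$ as a local minimum; combined with $h(0)=0$ and $h(x)\to+\infty$, the minimum of $h$ on $[0,\infty)$ is $h(x_+)$ (once it is shown to be negative).

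Next I would use the critical-point identity $(1+r) = 4x_+ - 3x_+^2$ to eliminate the parameter from $h(x_+)$, reducing $h(x_+) = x_+^3 - 2x_+^2 + (1+r)x_+$ to the compact form $h(x_+) = -2x_+^2(x_+ - 1)$. Substituting $x_+ = (2+\sqrt{\alpha})/3$ and $x_+ - 1 = (\sqrt{\alpha}-1)/3$ yields
\[
h(x_+) = -\frac{2(\sqrt{\alpha}-1)(\sqrt{\alpha}+2)^2}{27},
\]
which is strictly negative exactly when $\sqrt{\alpha} > 1$, i.e.\ when $r < 0$. Inverting sign gives $\tau^\ast = -1/h(x_+)$, matching \eqref{eq:critical_timestep_width}; for $\tau \geq \tau^\ast$ the intermediate value theorem applied to $h$ on $[0,\infty)$ delivers a zero of the symbol.

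The modified case is structurally identical: replacing $1+r$ by $1+r+3\bar{\psi}^2$ in $\hat{\mathcal{A}}_0$ amounts to replacing $r$ by $r + 3\bar{\psi}^2$ in the analysis. The same argument gives $x_\pm = (2\pm\sqrt{\alpha})/3$ with $\alpha = 1 - 3r - 9\bar{\psi}^2$, and the negativity condition $\sqrt{\alpha} > 1$ translates to $1 - 3r - 9\bar{\psi}^2 > 1$, i.e.\ $\bar{\psi}^2 < -r/3$. The main obstacle is purely algebraic: keeping the manipulation $h(x_+) = -2x_+^2(x_+ - 1)$ clean and verifying that no spurious minimum on $[0,\infty)$ is introduced when $x_-$ crosses zero (which happens for $r < -1$); in that regime $x_-$ leaves the feasible set and $x_+$ remains the unique minimizer on $[0,\infty)$, so the formula is unaffected.
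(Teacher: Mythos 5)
Your proposal is correct and follows essentially the same route as the paper: both reduce the question to the scalar cubic $h(x)=x^3-2x^2+(1+r)x$ in $x=\mathbf{k}^2$, locate its minimizer $x_+=(2+\sqrt{\alpha})/3$, and invert the minimal value to obtain $\tau^\ast$. Your version is in fact slightly more careful than the paper's (second-derivative check, the $r<-1$ case, and the intermediate-value argument for $\tau\geq\tau^\ast$), and it correctly gives the condition $\bar{\psi}^2<-r/3$ where the paper's proof contains a sign typo.
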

\begin{rem}
The modified operator $\hat{\mathcal{A}}_0$ can be derived by linearizing $\psi^3$ around a constant reference density $\bar{\psi}$: 
\[\psi^3\approx 3\bar{\psi}^2\psi - 2\bar{\psi}^3.\]
Adding this as an approximation of the non-linear term to the system \eqref{eq:A0} leads to the operator symbol \eqref{eq:A0_mod}.
\end{rem}

\begin{rem}
If we take $\bar{\psi}$ as the constant mean value of $\psi$ over $\Omega$, with $r$ has the physical meaning of an undercooling of the system, then the relation $|\bar{\psi}| = \sqrt{-r/3}$ is related to the solid-liquid transition in the phase-diagram of the PFC model, i.e. $|\bar{\psi}| > \sqrt{-r/3}$ leads to stable constant solutions, interpreted as liquid phase, and $|\bar{\psi}| < \sqrt{-r/3}$ leads to an instability of the constant phase, interpreted as crystalline state. An analysis of the stability condition can be found in \cite{Cheng2008, Elder2004} upon other.
\end{rem}

\begin{rem}
In \cite{Wise2009, Hu2009} an unconditionally stable discretization is provided that changes the structure of the matrix, i.e. the negative $2\mathbf{k}^4$ term is moved to the right-hand side of the equation. In order to analyze also the Rosenbrock scheme we can not take the same modification into account. The modification shown here is a bit similar to the stabilization proposed in \cite{Elsey2013}, but the authors have added a higher order term $C\mathbf{k}^4$ instead of the lower order term $C'\mathbf{k}^2$ in \eqref{eq:A0_mod}.
\end{rem}
\begin{proof} (Lemma \ref{lem:timestep_stability}). We analyze the eigenvalues of $\hat{\mathcal{A}}_0$ and get the eigenvalues of $\mathcal{A}_0$ as a special case for $\bar{\psi}=0$. The eigenvalue symbol $\hat{\mathcal{A}}_0$ gets zero whenever
\[\hat{\mathcal{A}}_0 = 0 \;\Leftrightarrow\; \tau = \frac{-1}{(3\bar{\psi}^2 + 1 + r)\mathbf{k}^2-2\mathbf{k}^4+\mathbf{k}^6}.\]
The minimal $\tau\in\mathbb{R}_{>0}$, denoted by $\tau^\ast$, that fulfils this equality is reached at
\[\mathbf{k}^2 = \frac{2}{3} + \frac{1}{3}\sqrt{1-3r-9\bar{\psi}^2} =: \frac{2}{3} + \frac{1}{3}\sqrt{\alpha}.\]
Inserting this into $\tau$ gives
\[\tau^\ast := \frac{27}{2(\sqrt{\alpha} - 1)(\sqrt{\alpha} + 2)^2}.\]
We have $\alpha \geq 0$ for $|\bar{\psi}| \geq \frac{1}{3}\sqrt{1-3r}$ and 
$\tau^\ast > 0 \;\Leftrightarrow\; \alpha > 1 \;\Leftrightarrow\; \bar{\psi}^2 > \frac{-r}{3}$
by simple algebraic calculations.
\end{proof}

On account of this zero eigenvalue, we restrict the spectral analysis to small timestep widths $\tau$. For $r=-0.35$, as in the numerical examples below, we get the timestep width bound $0<\tau<\tau^\ast\approx 2.6548$ for the operator $\mathcal{A}_0$ and with $\bar{\psi} = -0.34$ we have $0<\tau<\tau^\ast\approx 312.25$ for $\hat{\mathcal{A}}_0$, hence a much larger upper bound. In the following, we will use the modified symbols for all further calculations, i.e.
\begin{equation}\label{eq:A_mod}
\hat{\mathcal{A}} = \begin{bmatrix}
1 & -(3\bar{\psi}^2 + 1+r) +2\mathbf{k}^2 & \mathbf{k}^2 \\
\tau \mathbf{k}^2 & 1 & 0 \\
0 & \mathbf{k}^2 & 1
\end{bmatrix},
\end{equation}
and remove the hat symbol for simplicity, i.e. $\hat{\mathcal{A}}\rightarrow\mathcal{A},\;\hat{\mathcal{A}}_0\rightarrow\mathcal{A}_0$.

Calculating the eigenvalues of the preconditioner symbol $\mathcal{Q}:=\mathcal{A}\mathcal{P}^{-1}$, respective $\mathcal{Q}_\text{CH}:=\mathcal{A}\mathcal{P}_\text{CH}^{-1}$, directly gives the sets
\begin{align}
\sigma(\mathcal{Q}) &=\left\{1,1, \frac{\tau(\mathbf{k}^6 - 2\mathbf{k}^4 + (3\bar{\psi}^2 + 1+r)\mathbf{k}^2) + 1}{\tau\mathbf{k}^6 + (\sqrt{\tau} - 2\tau)\mathbf{k}^4 - \sqrt{\tau}\mathbf{k}^2 + 1}\; \Big| \;\mathbf{k}\in\mathbb{R}^m\right\} \label{eq:eigenvalues_B}\\
\sigma(\mathcal{Q}_\text{CH})&=\left\{1,1, \frac{\tau(\mathbf{k}^6 - 2\mathbf{k}^4 + (3\bar{\psi}^2 + 1+r)\mathbf{k}^2) + 1}{\tau\mathbf{k}^6 + (\sqrt{\tau} + 2\tau^{3/4})\mathbf{k}^4 + (\sqrt{\tau} + 2\tau^{1/4})\mathbf{k}^2 + 1}\; \Big| \;\mathbf{k}\in\mathbb{R}^m\right\}\label{eq:eigenvalues_B_ch}
\end{align}
with values all on the real axis (for $\tau > 0$). Similar to the analysis of $\mathcal{A}_0$ we get a critical timestep width, i.e. eigenvalues zero, for $\tau\geq\tau^\ast$. The denominator of the third eigenvalue of $\sigma(\mathcal{Q}_\text{CH})$ is strictly positive, but the denominator of $\sigma(\mathcal{Q})$ can reach zero. This would lead to bad convergence behavior, since divergence of this eigenvalue would lead to divergence of the asymptotic convergence factor in \eqref{eq:estimate_interval}. 

The critical timestep width, denoted by $\tau^\natural$, that allows a denominator with value zero is given by
$\tau = (-\mathbf{k}^4 + 2\mathbf{k}^2)^{-2}$
that is minimal positive for $|\mathbf{k}|=1$ and gives $\tau^\natural = 1$. Thus, for the preconditioner $\mathcal{P}$ we have to restrict the timestep width to $\tau\in(0,\tau^\natural)\subset(0,\tau^\ast)$. This restriction is not necessary for the preconditioner $\mathcal{P}_\text{CH}$.

\subsection{The asymptotic convergence factor}
Since the third eigenvalue of $\mathcal{Q}$ in \eqref{eq:eigenvalues_B}, respective $\mathcal{Q}_\text{CH}$ in \eqref{eq:eigenvalues_B_ch}, is a real interval $\subset\mathbb{R}_+$, for $\tau$ in the feasible range $(0,\min(\tau^\natural,\tau^\ast))$, we can use formula \eqref{eq:estimate_interval} to estimate an asymptotic convergence factor for the lower bound on $\min_{p\in\Pi_k}\|p(\mathcal{Q}_\ast)\|$. For fixed $r=-0.35$ and various values of $\bar{\psi}$ minimum and maximum of \eqref{eq:eigenvalues_B} and  \eqref{eq:eigenvalues_B_ch} are calculated numerically. Formula \eqref{eq:estimate_interval} thus gives the corresponding estimated asymptotic convergence factor (see left plot of Figure \ref{fig:asymptotic_convergence_factor}). For step widths $\tau$ less than 1 we have the lowest convergence factor for the operator $\mathcal{Q}$ and the largest for the original operator $\mathcal{A}_0$. The operator $\mathcal{Q}_\text{CH}$ has slightly greater convergence factor than $\mathcal{Q}$ but is more stable with respect to an increase in timestep width.

The stabilization term $3\bar{\psi}^2$ added to $\mathcal{A}_0$ and $\mathcal{A}$ in \eqref{eq:A0_mod}, respective \eqref{eq:A_mod}, influences the convergence factor of $\mathcal{A}_0$ and $\mathcal{Q}$ only slightly, but the convergence factor of $\mathcal{Q}_\text{CH}$ is improved a lot, i.e. the critical timestep width is shifted towards positive infinity.

The upper bounds on the convergence factor are analyzed in the next Section.

\begin{figure}[ht]
\begin{center}
\begin{tabularx}{\textwidth}{Xc}
\parbox[b]{\hsize}{\includegraphics{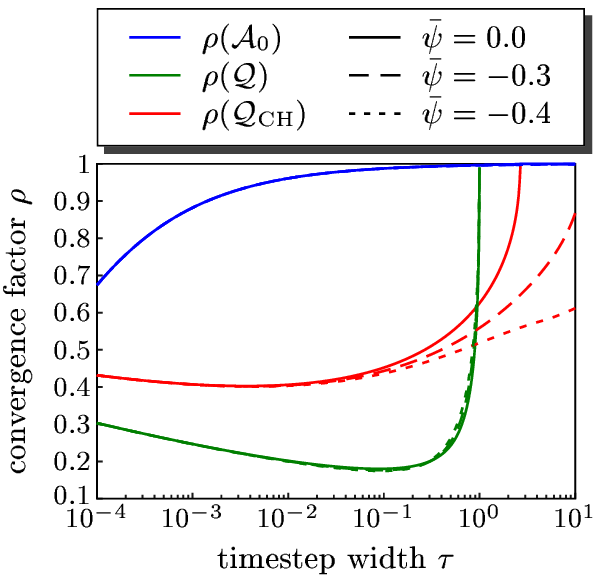}} & \includegraphics{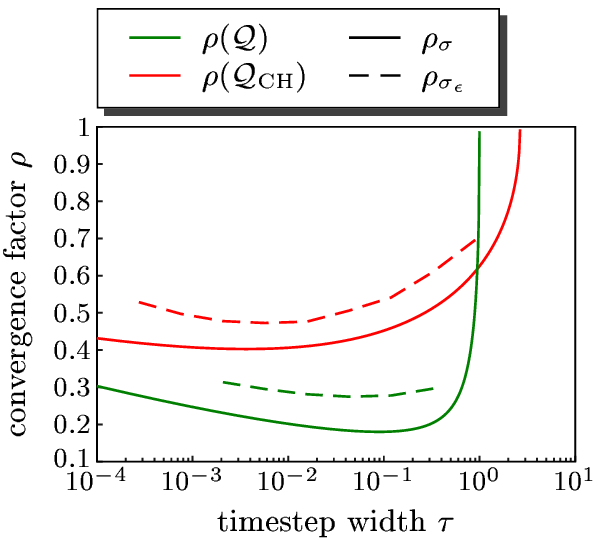} \\ 
\end{tabularx} 
\caption{Left: Asymptotic convergence factor for operators $\mathcal{A}_0,\mathcal{Q}$ and $\mathcal{Q}_\text{CH}$. In dashed lines, the dependence on the mean density $\bar{\psi}$-modification \eqref{eq:A0_mod} is shown. Right: Comparison of the convergence factor related to the spectrum and $\epsilon$-pseudospectrum is shown. This corresponds to lower and upper bounds of the actual asymptotic convergence factor.}\label{fig:asymptotic_convergence_factor}
\end{center}
\end{figure}

\subsection{Analysis of the pseudospectrum}
%% non-normality of A and B
As it can be seen by simple calculations, the symbol $\mathcal{A}$ is non-normal:
\[\big(\mathcal{A}^\top\mathcal{A} - \mathcal{A}\mathcal{A}^\top\big)_{2,0} = \mathbf{k}^2\big((3\bar{\psi}^2+1+r) - 2\mathbf{k}^2\big)\neq 0\]
for a matrix entry at row 2 and column 0. For slightly more complex calculations it can be shown, that also $\mathcal{Q}:=\mathcal{A}\mathcal{P}^{-1}$ and $\mathcal{Q}_\text{CH}:=\mathcal{A}\mathcal{P}_\text{CH}^{-1}$ are non-normal:
\[\big(\mathcal{Q}^\top\mathcal{Q} - \mathcal{Q}\mathcal{Q}^\top\big)_{2,2} = \big(\mathcal{Q}_\text{CH}^\top\mathcal{Q}_\text{CH} - \mathcal{Q}_\text{CH}\mathcal{Q}_\text{CH}^\top\big)_{2,2} = \mathbf{k}^4\neq 0.\]

For non-normal matrices we have to analyze the $\epsilon$-pseudospectrum in order to get an estimate of convergence bounds for the GMRES method, as pointed out in Section \ref{sec:convergence_prefactor}. 

%The eigenvector-matrices $\mathcal{V}$ and $\mathcal{V}_\text{CH}$ of $\mathcal{Q}$ and $\mathcal{Q}_\text{CH}$ (not normalized) read
%\begin{align*}
%\mathcal{V} = \left(\begin{array}{cc} \frac{1}{\tau\mathbf{k}^2} & -\frac{3\bar{\psi}^2 + 1 + r}{\sqrt{\tau}(\mathbf{k}^2 - \mathbf{k}^4)}\\ 1 & 1\\ 0 & 0 \end{array}\right),\quad
%\mathcal{V}_\text{CH} = \left(\begin{array}{cc} \frac{1}{\tau\mathbf{k}^2} & -\frac{2\mathbf{k}^2(1 + \tau^{1/4}) - (3\bar{\psi}^2 + 1+r)\tau^{1/4}}{\sqrt{\tau}\mathbf{k}^2(\tau^{1/4}(\mathbf{k}^2+1) + 2)}\\ 1 & 1\\ 0 & 0 \end{array}\right),
%\end{align*}
%with rank less than 3. Thus $\mathcal{V}_\ast$ is not invertible $\mathcal{Q}$, respective $\mathcal{Q}_\text{CH}$, are not diagonalizable. The convergence bound \eqref{eq:convergence_factor_condition_nr} is not applicable.

Using the Matlab Toolbox \texttt{Eigtool} provided by \cite{Eigtool} we can calculate the pseudospectra $\sigma_\epsilon$ and approximations $\Gamma_\epsilon$ of its boundaries with single closed Jordan curves for all wave-numbers $k_i\in[0,k_{\max}]$. The maximal frequency used in the calculations is related to the grid size $h$ of the corresponding triangulation, $k_{\max}=\frac{\pi}{h}$. In all the numerical examples below, we have used a grid size $h=\frac{\pi}{4}$ that can resolve all the structures sufficiently well, thus we get $k_{\max}=4$ that leads to $|\mathbf{k}|_{\max}=\sqrt{m}k_{\max}$.

The $\epsilon$-pseudospectrum of $\mathcal{Q}$ and $\mathcal{Q}_\text{CH}$ can be seen in Figure \ref{fig:pseudospectrum} for various values of $\epsilon$. The pseudospectrum of $\mathcal{Q}_\text{CH}$ gets closer to the origin than that of $\mathcal{Q}$, since the eigenvalues get closer to the origin as well. The overall structure of the pseudospectra is very similar.

\begin{figure}[ht]
\begin{center}
\includegraphics{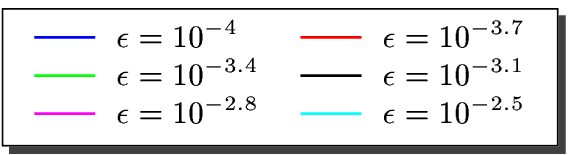}
\begin{tabular}{cc}
\includegraphics{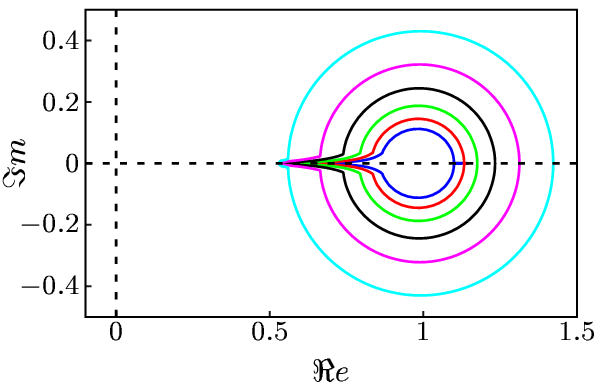} & \includegraphics{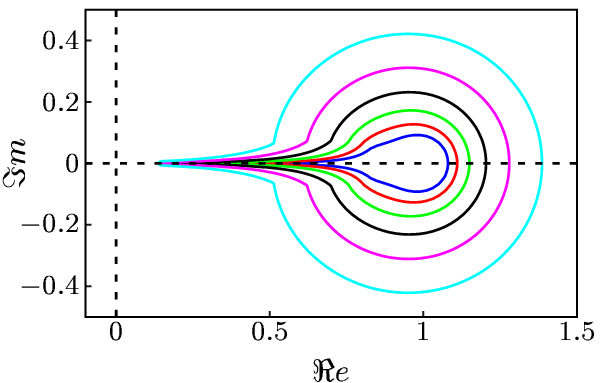} \\ 
\end{tabular} 
\caption{$\epsilon$-Pseudospectra of the preconditioned matrices for various values of $\epsilon$. Left: $\sigma_{\epsilon}(\mathcal{Q})$, Right: $\sigma_{\epsilon}(\mathcal{Q_\text{CH}})$, for $\tau=0.1$ and $\mathbf{k}$ in the restricted range $[0,|\mathbf{k}|_{\max}]$. The dashed lines correspond to the zero-axis and indicate the origin. }\label{fig:pseudospectrum}
\end{center}
\end{figure}

For the convergence factor corresponding to the pseudospectra, we compute the inverse conformal map $\Psi=\Phi^{-1}$ of the exterior of the unit disk to the exterior of a polygon $\mathcal{S}_0$ approximating the set $\mathcal{S}=\sigma_\epsilon$, by the Schwarz-Christoffel formula, using the SC \textsc{Matlab} Toolbox \cite{Driscoll1996,Driscoll1998b}. A visualization of the inverse map $\Psi$ for the pseudospectrum of $\mathcal{A}$ and $\mathcal{Q}$ can be found in Figure \ref{fig:conformal_map}.

\begin{figure}[ht]
\begin{center}
\begin{tabular}{cc}
\includegraphics{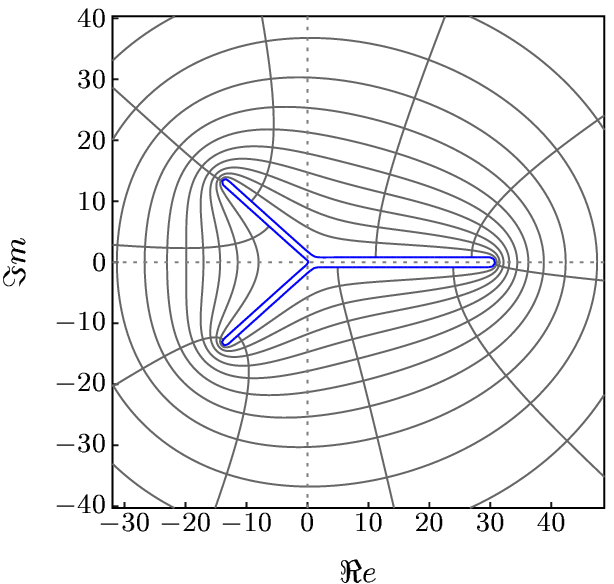} & \includegraphics{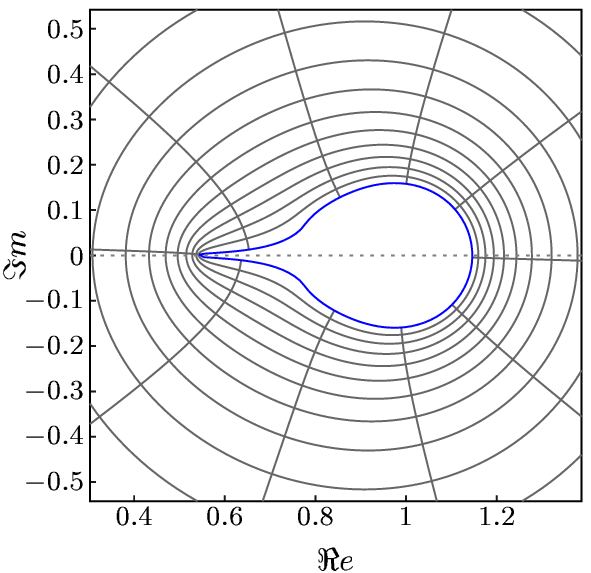} \\ 
\end{tabular} 
\caption{Inverse conformal map $\Psi$ of the unit disk to the exterior of a polygon enclosing the $\epsilon$-pseudospectrum of $\mathcal{A}$, respective $\mathcal{Q}$ for one $\epsilon$ and a restricted range of frequencies. For both plots the $\epsilon$ value is chosen small enough to have 0 in the exterior of the peudospectrum. }\label{fig:conformal_map}
\end{center}
\end{figure}

% TODO: Nachfolgenden Abschnitt nach Neuberechnung der Werte entweder entfernen, oder Bilder aktualisieren.
Evaluating the asymptotic convergence factor depending on the $\epsilon$-pseudospectrum of the matrices is visualized in Figure \ref{fig:prefactor_asymptotic_convergence_factor}. The calculation is performed for fixed timestep width $\tau=0.1$ and parameters $r=-0.35$ and $\bar{\psi}=0$.

Increasing $\epsilon$ increases the radius of the sphere like shape around the point $1$. For a simple disc the convergence factor is proportional to the radius (see \cite{Driscoll1998}), thus we find increasing convergence factors also for our disc with the tooth. When $\epsilon$ gets too large the pseudospectrum may contain the origin that would lead to useless convergence bounds, since then $\rho_{\sigma_\epsilon}>1$ in \eqref{eq:convergence_factor_S}. If $\epsilon$ gets too small, the convergence constant $C$ in \eqref{eq:upper_bound_min_max} is growing rapidly, since $|\Gamma_\epsilon|$ is bounded from below by the eigenvalue interval length, i.e. $|\Gamma_\epsilon|\leq 2(\beta - \alpha) = 2\big((\max(\sigma(\mathcal{Q})), \min(\sigma(\mathcal{Q}))\big)$, and we divide by $\epsilon$. Thus, the estimates also are not meaningful in the limit $\epsilon\rightarrow 0$.

An evaluation of the constant $C$ for various values $\epsilon$ can be found in the left plot of Figure \ref{fig:prefactor_asymptotic_convergence_factor}. It is a log-log plot with constants in the range $[10^2, 10^4]$.

For all $\epsilon>0$ the upper bound \eqref{eq:upper_bound_min_max} is valid, so we have chosen $\epsilon=10^{-3}$ and plotted the resulting estimated asymptotic convergence factor in relation to the lower bound, i.e. the convergence factor corresponding to the spectrum of the matrices, in the right plot of Figure \ref{fig:asymptotic_convergence_factor}. The upper bound is just slightly above the lower bound. Thus, we have a convergence factor for $\mathcal{Q}$ that is in the range $0.2-0.3$ (for $\tau=0.1$ and $r=-0.35$) and for the matrix $\mathcal{Q}_\text{CH}$ in the range $0.45-0.55$, that is much lower than the lower bound of the convergence factor of $\mathcal{A}_0$ (approximately $0.99$). 

So both, the preconditioner $\mathcal{P}$ and $\mathcal{P}_\text{CH}$ improves the asymptotic convergence factor a lot and we expect fast convergence also in the case of discretized matrices.
\begin{figure}[ht]
\begin{center}
\begin{tabular}{cc}
\includegraphics{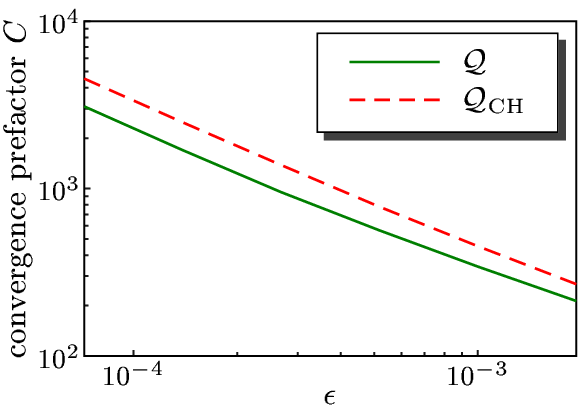} & \includegraphics{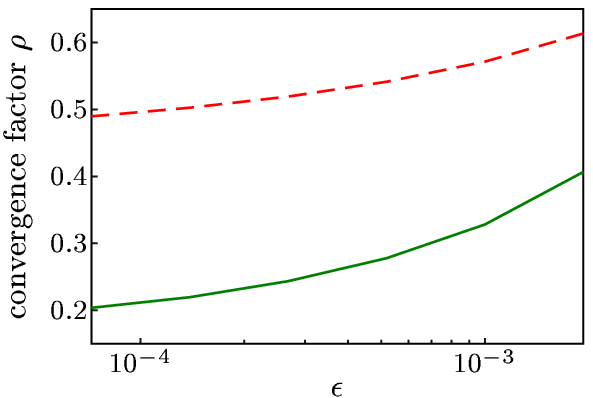} 
\end{tabular} 
\caption{Left: Estimated convergence prefactor $C:=\frac{|\Gamma_\epsilon|}{2\pi\epsilon}$ for the matrix $\mathcal{Q}=\mathcal{A}\mathcal{P}^{-1}$ and  $\mathcal{Q}_\text{CH}=\mathcal{A}\mathcal{P}_\text{CH}^{-1}$ with $r=-0.35$ and $\tau=0.12$, plotted in logarithmic scale for $\epsilon$ and $C$. Right: Estimated asymptotic convergence factor $\rho_{\sigma_\epsilon(\mathcal{Q})}=|\Phi_{\sigma_\epsilon(\mathcal{Q})}(0)|^{-1}$ and $\rho_{\sigma_\epsilon(\mathcal{Q}_\text{CH})}$ analogously. The legend in the left plot is valid also for the right plot.}\label{fig:prefactor_asymptotic_convergence_factor}
\end{center}
\end{figure}

% ===========================================================================
% ===========================================================================
\section{Numerical studies}\label{seq:numerical_examples}
We now demonstrate the properties of the preconditioner numerically. We consider a simple crystallization problem in 2D and 3D, starting with an initial grain in a corner of a rectangular domain. 
The solution of the PFC equation in the crystalline phase is a periodic wave-like field with specific wave length and amplitude. In \cite{Elder2004,Jaatinen2010} a single mode approximation for the PFC equation in 2D and 3D is provided. These approximations show a wave length of $\mathit{d}:=4\pi/\sqrt{3}$, corresponding to a lattice spacing in a hexagonal crystal in 2D and a body-centered cubic (BCC) crystal in 3D. We define the domain $\Omega$ as a rectangle/cube with edge length, a multiple of the lattice spacing: $\Omega=[N\cdot \mathit{d}]^{2,3}$, with $N\in\mathbb{N}_{>0}$. Discretizing one wave with 10 gridpoints leads to a sufficient resolution. Our grid size therefore is $h=\frac{d}{10}\approx\frac{\pi}{4}$ throughout the numerical calculations. We use regular simplicial meshes, with $h$ corresponding to the length of an edge of a simplex for linear elements and twice its length for quadratic elements to guarantee the same number of degrees of freedom (DOFs) within one wave.

\subsection{General problem setting and results}
As system parameters we have chosen values corresponding to a coexistence of liquid and crystalline phases: 2D ($\bar\psi = -0.35$, $r=-0.35$), 3D ($\bar\psi = -0.34$, $r=-0.3$). Both parameter sets are stable for large timestep widths, with respect to Lemma \ref{lem:timestep_stability}. In Figure \ref{fig:coexistence} snapshots of the coexistence regime of liquid and crystal is shown for two and three dimensional calculations.

\begin{figure}[ht]
\begin{center}
\includegraphics[width=.28\linewidth]{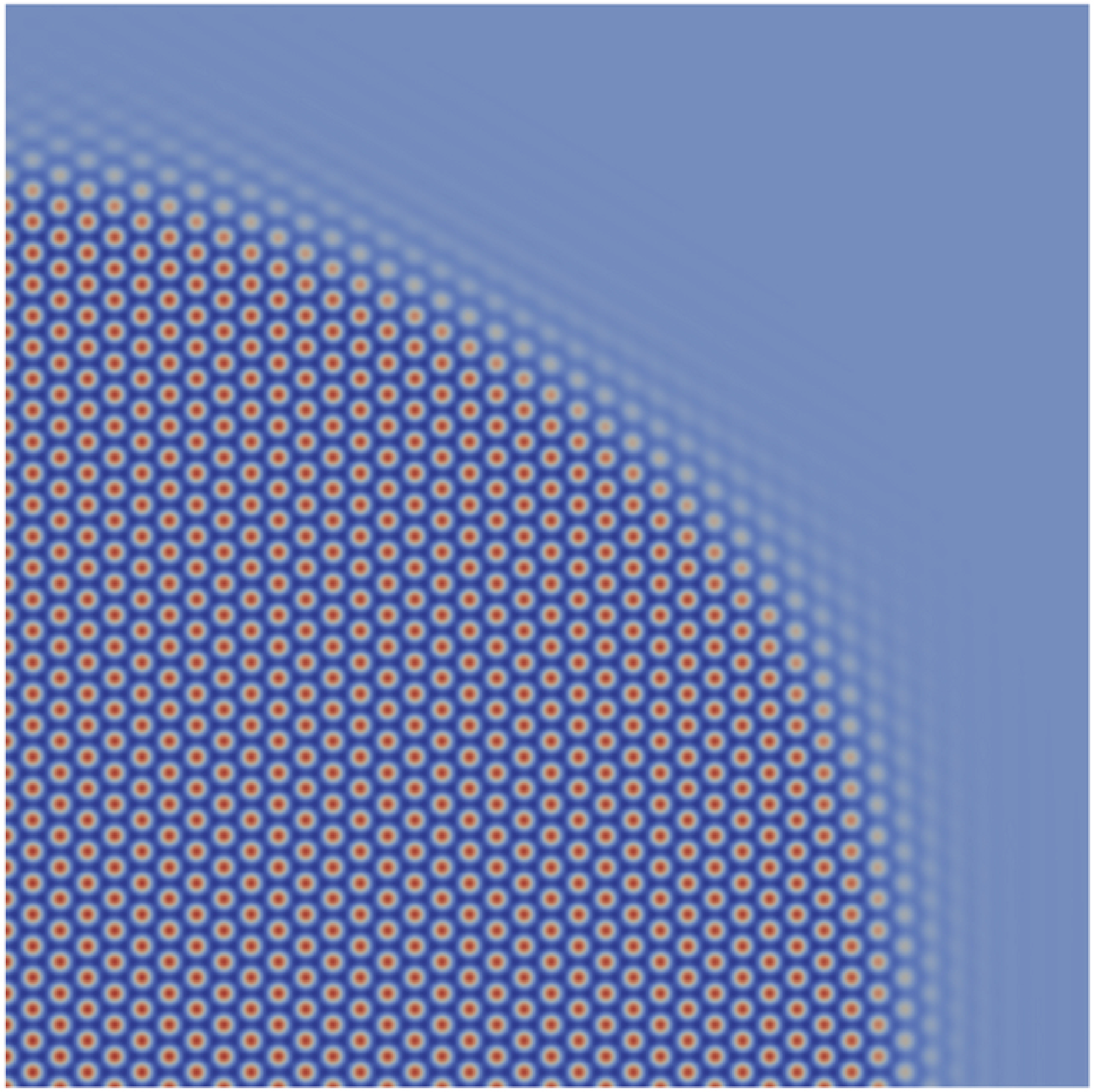}  
\includegraphics[width=.31\linewidth]{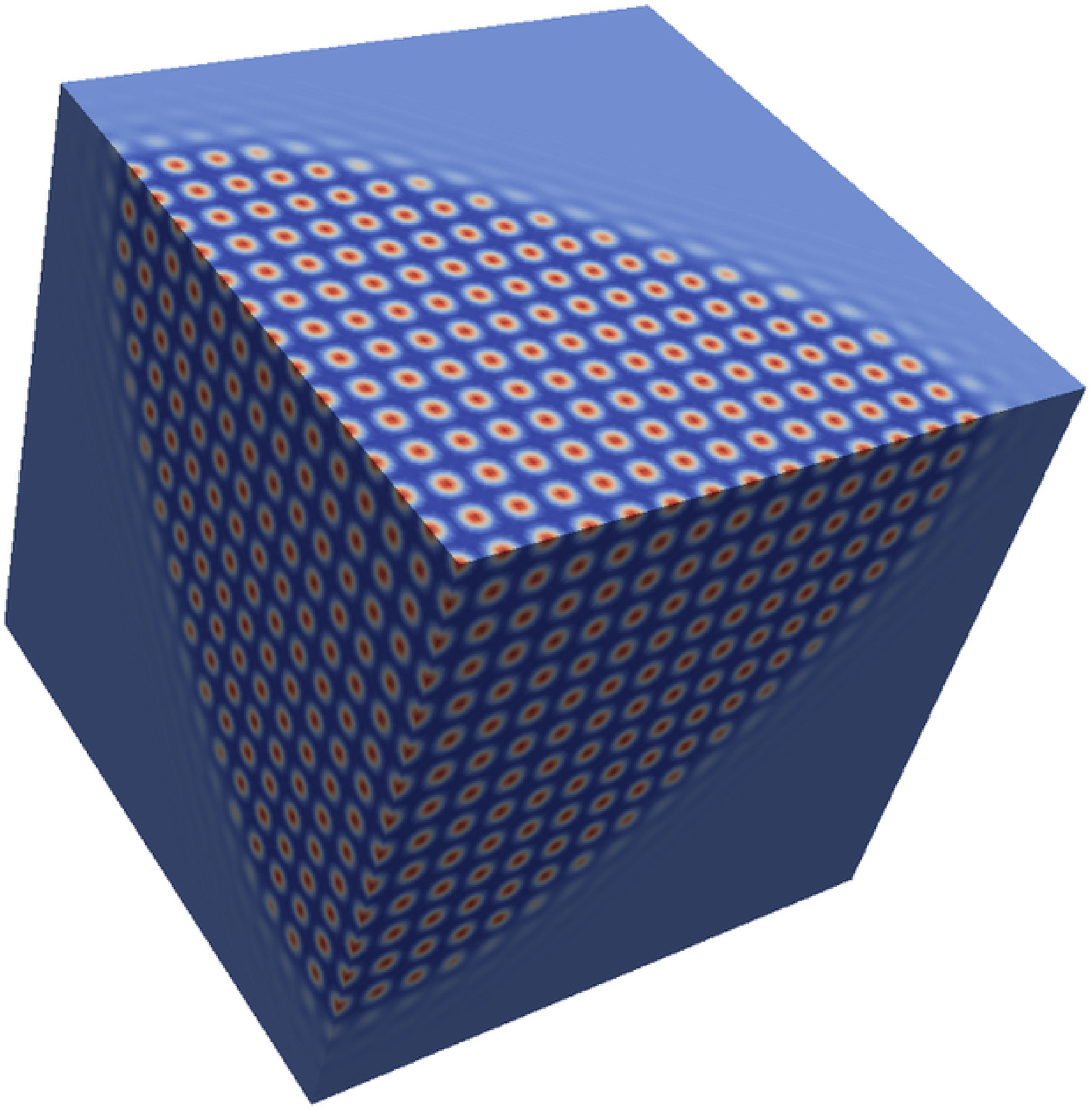} 
\caption{Intermediate state of growing crystal, starting from one corner of the domain. Shown is the order parameter field $\psi$. Left: $\Omega=[20d]^2$, number of DOFs: 263,169, calculated on 1 processor; Right: $\Omega=[12d]^3$, number of DOFs: 101,255,427, calculated on 3.456 processors. }\label{fig:coexistence} % Center: $\Omega=[8d]^3$, number of DOFs: 9011585, calculated with 512 processors, 
\end{center}
\end{figure}

In the steps (1), (2), (3) and (5) of the preconditioner solution procedure, linear systems have to be solved. For this task we have chosen iterative solvers with standard preconditioners and parameters as listed in Table \ref{tbl:inner_solvers}. The PFC equation is implemented in the finite element framework AMDiS \cite{AMDiS, AMDiS_parallel} using the linear algebra backend MTL4 \cite{MTL4, MTL4-cuda, AMDiS-MTL4} in sequential calculations and PETSc \cite{PETSc} for parallel calculation for the block-preconditioner \eqref{eq:pfc-preconditioner} and the inner iterative solvers. As outer solver, a FGMRES method is used with restart parameter 30 and modified Gram-Schmidt orthogonalization procedure. The spatial discretization is done using Lagrange elements of polynomial degree $p =1,2$ and as time discretization the implicit Euler or the described Rosenbrock scheme is used.  

\begin{table}[ht]
\begin{center}
\begin{tabular}{l|l|l|l|l}
precon. steps & matrix & solver & precond. & rel. tolerance \\ 
\hline 
(1),(5) & $\mathbf{M}$ & PCG & diag & $10^{-3}$\\ 
(2) & $\mathbf{M} + \delta\mathbf{K}$ & PCG & diag & $10^{-3}$ \\ 
(3.1), (3.2) & $\mathbf{M} + \sqrt{\delta}\mathbf{K}$ & PCG & diag & $10^{-3}$\\ 
(3) & $\mathbf{M} - 2\delta\mathbf{K}+\delta\mathbf{K}\mathbf{M}_D^{-1}\mathbf{K}$ & PCG & diag & 20 (iter.)\\ 
\end{tabular} 
\end{center}
\caption{Parameters for the inner solvers of the preconditioner with Cahn-Hilliard approximation $\mathbf{S}_\text{CH}$ and in the last line for the diagonal approximation $\mathbf{S}_D$ of the matrix $\mathbf{S}$. `PCG' is the shortcut for \textit{preconditioned conjugate gradient} method. The preconditioner named `diag' indicates a Jacobi preconditioner. We have solved each inner system up to a relative solver tolerance given in the last column of the table. Only in the case of the matrix $\mathbf{S}_D$ it is more efficient to use a fixed number of iteration.}\label{tbl:inner_solvers}
\end{table}

The first numerical test compares a PFC system solved without a preconditioner to a system solved with the developed preconditioner. In Figure \ref{fig:precon_no-precon} the relative residual in the first timestep of a small 2D system is visualized. For increasing timestep widths the FGMRES solver without preconditioner (dashed lines) shows a dramatic increase of the number of iterations up to a nearly stagnating curve for timestep widths greater than 0.5. On the other hand, we see in solid lines the preconditioned solution procedure that is much less influenced by the timestep widths and reaches the final residual within 20-30 iterations. A detailed study of the influence of the timestep width can be found below. For larger systems, respective systems in 3D, we get nearly no convergence for the non-preconditioned iterations. 

\begin{figure}[ht]
\begin{center}
\includegraphics{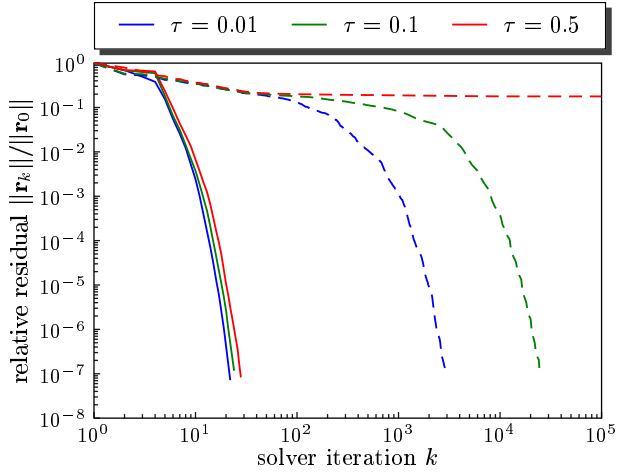}
\caption{Relative residual of the solver iterations. Solid lines show preconditioned solver iterations and dashed lines iterations without a preconditioner. The systems is $\Omega=[d]^2,\;h=\frac{\pi}{4}$.}\label{fig:precon_no-precon}
\end{center}
\end{figure}

Next, we consider the solution procedure of the sub-problems in detail. Table \ref{tbl:direct_iterative_solvers} shows a comparison between an iterative \textit{preconditioned conjugate gradient} method (PCG) and a direct solver, where the factorization is calculated once for each sub-problem matrix per timestep. The number of outer iterations increases, when we use iterative inner solvers, but the overall solution time decreases since a few PCG steps are faster than the application of a LU-factorization to the Krylov vectors. This holds true in 2D and 3D for polynomial degree 1 and 2 of the Lagrange basis functions. 

\begin{table}[ht]
\begin{center}
\begin{tabular}{c||c|c|c||c|c}
& & \multicolumn{2}{c||}{direct} & \multicolumn{2}{c}{iterative}  \\ 
dim. $m$ & poly. degree $p$ & time [sec] & \#iterations & time [sec] & \#iterations \\ 
\hline 
\multirow{2}{*}{2D} & 1 & 6.11 & 14 & 1.65 & 14\\ 
& 2 & 5.89 & 14 & 3.04 & 15\\ 
\hline
\multirow{2}{*}{3D} & 1 & 41.72 & 17 & 3.36 & 18\\ 
& 2 & 35.24 & 17 & 9.62 & 19\\ 
\end{tabular} 
\end{center}
\caption{Comparison of the number of iterations and time to solve the linear system averaged over 20 timesteps for the preconditioner matrix $\mathbf{S}_\text{CH}$ with timestep width $\tau=0.1$. Sub-problems of the preconditioner are solved with iterative solvers as in Table \ref{tbl:inner_solvers} or with the direct solver UMFPACK. The benchmark configuration is a problem with approximately 66,000 DOFs and grid size $h=\pi/4$.}\label{tbl:direct_iterative_solvers}
\end{table}

We now compare the two proposed preconditioners regarding the same problem. Table \ref{tbl:inner_P_Pch_solvers} shows a comparison of the approximation of the sub-problem (3) by either the diagonal mass-matrix approximation $\mathbf{S}_D$ or the Cahn-Hilliard preconditioner approximation $\mathbf{S}_\text{CH}$. In all cases, the number of outer solver iterations needed to reach the relative tolerance and also the time for one outer iteration is lower for the $\mathbf{S}_\text{CH}$ approximation than for the $\mathbf{S}_D$ approximation. 

\begin{table}[ht]
\begin{center}
\begin{tabular}{c||c|c|c||c|c}
& & \multicolumn{2}{c||}{$\mathbf{S}_\text{CH}$} & \multicolumn{2}{c}{$\mathbf{S}_D$}  \\ 
dim. $m$ & poly. degree $p$ & time [sec] & \#iterations & time [sec] & \#iterations \\ 
\hline 
\multirow{2}{*}{2D} &  1 & 1.65 & 14 & 2.72 & 16\\ 
& 2 & 3.04 & 15 & 7.03 & 20\\ 
\hline 
\multirow{2}{*}{3D} & 1 & 3.36 & 18 & 8.14 & 21\\ 
& 2 & 9.62 & 19 & 81.49 & 55\\ 
\end{tabular} 
\end{center}
\caption{Comparison of the number of iterations and time to solve the linear system averaged over 20 timesteps for the preconditioner with diagonal approximation $\mathbf{S}_D$ of $\mathbf{S}$, respective Cahn-Hilliard approximation $\mathbf{S}_\text{CH}$. Sub-problems of the preconditioner are solved with iterative solvers as in Table \ref{tbl:inner_solvers}. 
The benchmark configuration is a problem with approximately 66,000 DOFs, timestep width $\tau=0.1$ and grid size of $h=\pi/4$.}\label{tbl:inner_P_Pch_solvers}
\end{table}

In the following, we thus use the preconditioned solution method with $\mathbf{S}_\text{CH}$ and PCG for the sub-problems. We analyse the dependence on the timestep width in detail and compare it with the theoretical predictions and show parallel scaling properties. 

\subsection{Influence of timestep width}
In Table \ref{tbl:timestep_change} the time to solve the linear system averaged over 20 timesteps is listed for various timestep widths $\tau$. All simulations are started from the same initial condition that is far from the stationary solution. It can be found that the solution time increases and also the number of outer solver iterations increases. In Figure \ref{fig:benchmark_tau}, this increase in solution time is visualized for various parameter sets for polynomial degree and space dimension. The behaviour corresponds to the increase in the asymptotic convergence factor (see Figure \ref{fig:asymptotic_convergence_factor}) for increasing timestep widths. 

\begin{table}[ht]
\begin{center}
\begin{tabular}{c|c|c||c|c}
 & \multicolumn{2}{c||}{2D} & \multicolumn{2}{c}{3D}  \\ 
timestep width $\tau$ & time [sec] & \#iterations & time [sec] & \#iterations \\ 
\hline 
0.01 &  2.50 & 13 & 8.01 & 17 \\ 
0.1 & 3.05 & 15 & 9.62 & 19\\ 
1.0 & 4.53 & 19 & 14.29 & 24 \\ 
10.0 & 10.81 & 47 & 34.94 & 58\\ 
\end{tabular} 
\end{center}
\caption{Comparison of time to solve the linear system averaged over 20 timesteps for various timestep widths $\tau$ for a 2D and a 3D system. The benchmark configuration is a problem with polynomial degree $p=2$ with approximately 66,000 DOFs.}\label{tbl:timestep_change}
\end{table}

We have analyzed whether a critical timestep width occurs in the two approximations of $\mathbf{S}$ (see Figure \ref{fig:tau_iterations}). The diagonal approximation $\mathbf{S}_\text{D}$ is spectrally similar to the original preconditioner $\mathbf{S}$ that has shown the critical timestep width $\tau^\natural = 1$. In the numerical calculations, $\mathbf{S}_\text{D}$ shows a critical value around the analytical value, but it varies depending on the finite element approximation of the operators. For linear Lagrange elements we see $\tau^\natural\approx 2$ and for quadratic element $\tau^\natural\approx 0.6$. The Cahn-Hilliard approximation $\mathbf{S}_\text{CH}$ does not show a timestep width, where the number of outer iterations explodes, at least in the analyzed interval $\tau\in[10^{-3}, 10^1]$. The difference in the finite element approximations is also not so pronounced as in the case of $\mathbf{S}_\text{D}$.

\begin{figure}[ht]
\begin{center}
\begin{tabular}{l}
\includegraphics{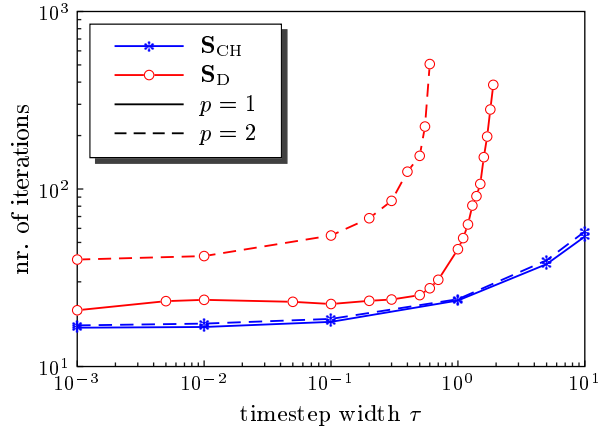}
\end{tabular} 
\caption{Increase in number of outer iterations when the timestep width increases. The red curves (curcular dots) correspond to the diagonal approximation $\mathbf{S}_\text{D}$ of $\mathbf{S}$ and the blue curves (asterisk dots) to the Cahn-Hilliard approximation $\mathbf{S}_\text{CH}$. All simulations are performed in 3D in a domain with grid size $h=\pi/4$. The solid lines correspond to simulations with polynomial degree $p=1$ and the dashed lines with polynomial degree $p=2$.}\label{fig:tau_iterations}
\end{center}
\end{figure}

\begin{figure}[ht]
\begin{center}
\includegraphics{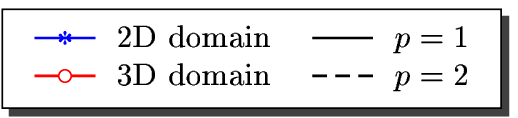}
\begin{tabular}{ll}
\includegraphics{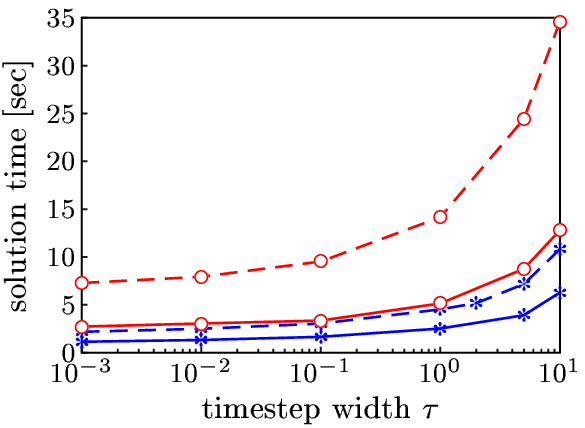} & \includegraphics{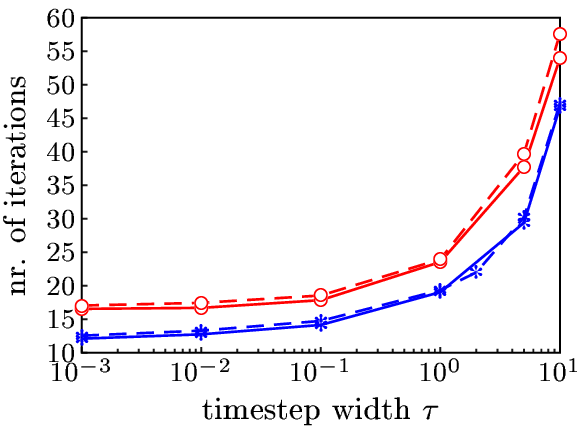} \\ 
\end{tabular} 
\caption{Left: Solution time per timestep iteration for various timestep widths $\tau$ averaged over 20 timesteps. Right: Number of outer iterations per timestep for various timestep widths $\tau$ averaged over 20 timesteps. The four curves show the dependence on dimension (2D or 3D) and on polynomial degree $p$ of the Lagrange basis functions.}\label{fig:benchmark_tau}
\end{center}
\end{figure}

While in all previous simulations an implicit Euler discretization was used, we now will demonstrate the benefit of the described Rosenbrock scheme, for which the same preconditioner is used. Adaptive time stepping becomes of relevance, especially close to the stationary solution, where the timestep width needs to be increased rapidly to reduce the energy $F(\psi)$ further. In order to allow for large timestep widths the iterative solver, respective preconditioner, must be stable with respect to an increase in this parameter. 

In Figure \ref{fig:orientation} the system setup and evolution for the Rosenbrock benchmark is shown. We use 10 initial grains randomly distributed and oriented in the domain and let the grains grow until a stable configuration emerges. When growing grains touch each other, they build grain boundaries with crystalline defects. The orientation of the final grain configuration is shown in the right plot of Figure \ref{fig:orientation} with a color coding with respect to an angle of the crystal cells relative to a reference orientation.

The time evolution of the timestep width obtained by an adaptive step size control and the evolution of the corresponding solver iterations is shown in the left plot of Figure \ref{fig:timestep_solution_time}. Small grains grow until the whole domain is covered by particles. This happens in the time interval $[0,200]$, where small timestep widths are required. From this time, the timestep width is increased a lot by the step size control since the solution is in a nearly stable state. The number of outer solver iterations increases with increasing timestep width, as expected. Timestep widths up to 18 in the time evolution are selected by the step size control and work fine with the proposed preconditioner.

In the right plot of Figure \ref{fig:timestep_solution_time}, the relation of the obtained timestep widths to the solution time is given. Increasing the timestep widths increases also the solution time, but the increasing factor is much lower than that of the increase in timestep width, i.e. the slope of the curve is much lower than 1. Thus, it is advantageous to increase the timestep widths as much as possible to obtain an overall fast solution time.

\begin{figure}[ht]
\begin{center}
\begin{tabular}{ccc}
\includegraphics[width=.28\textwidth]{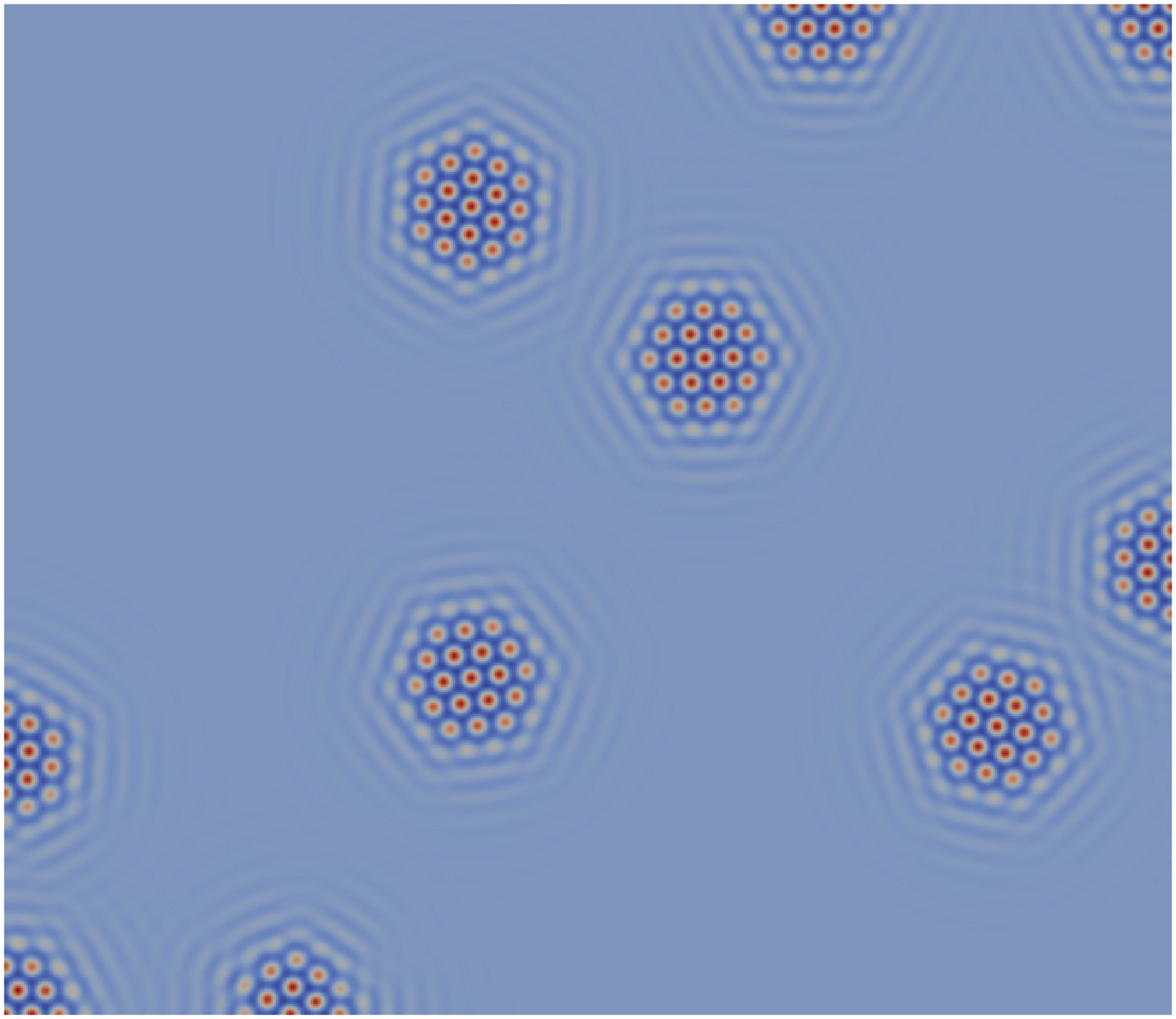} & \includegraphics[width=.28\textwidth]{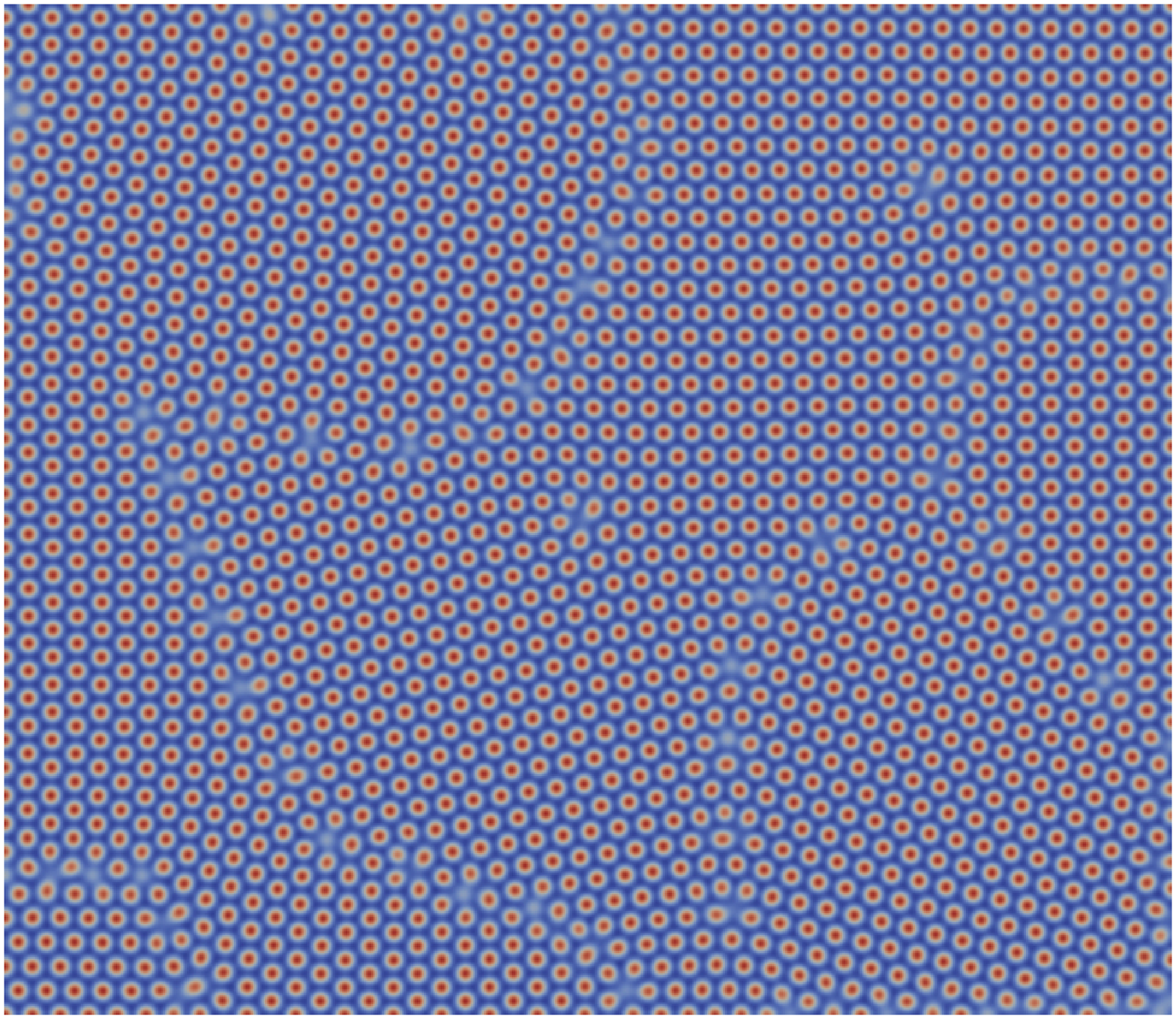} & \includegraphics[width=.28\textwidth]{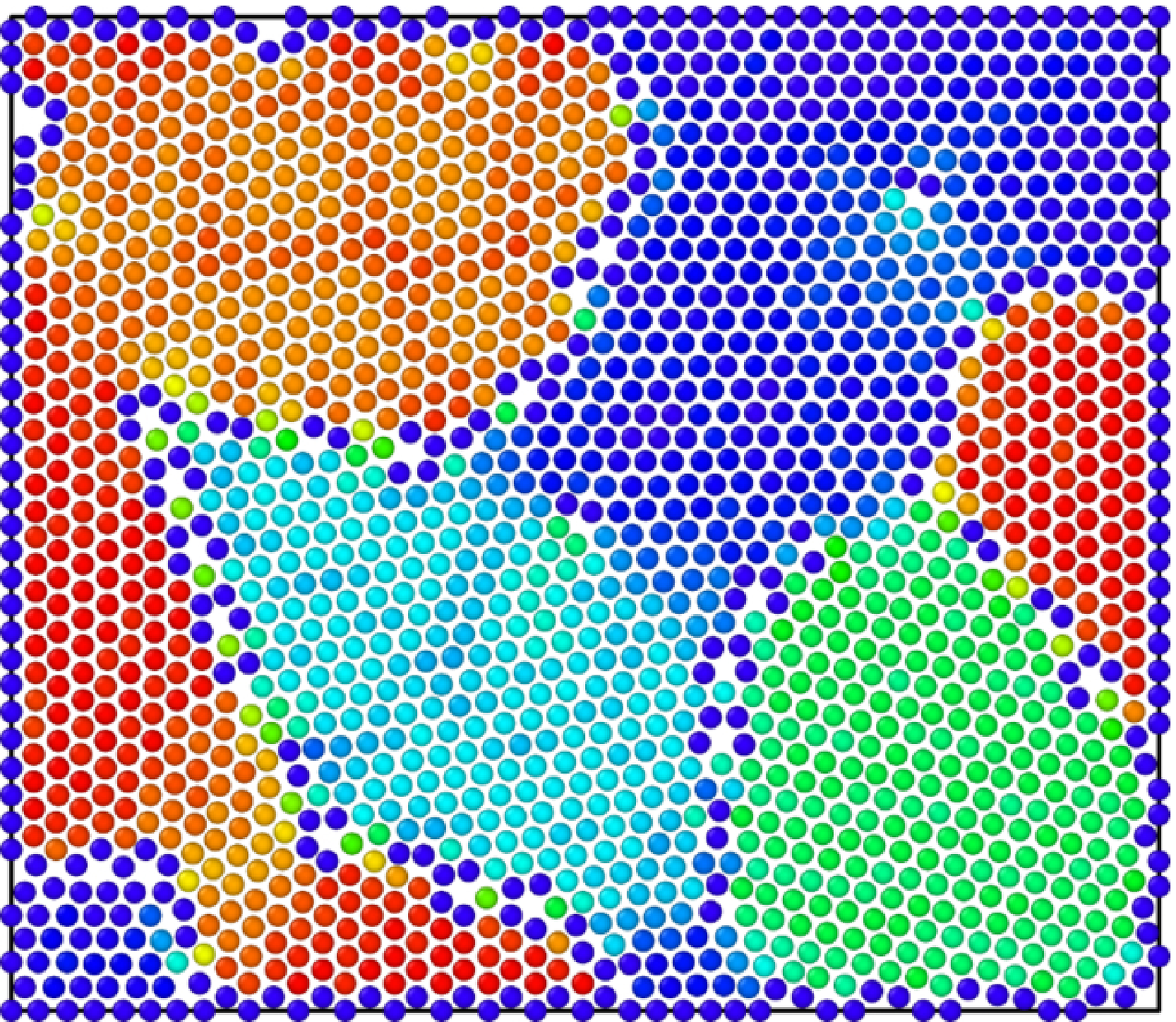} \\ 
\end{tabular} 
\caption{Grain growth simulation. Left: Initial grains that do not touch each other. Center: Grown grains with different orientation and grain boundaries. Right: Coloring of the different crystal orientations. The coloring fails at the boundary of the domain.}\label{fig:orientation}
\end{center}
\end{figure}

\begin{figure}[ht]
\begin{center}
\begin{tabular}{cc}
\includegraphics[scale=.96]{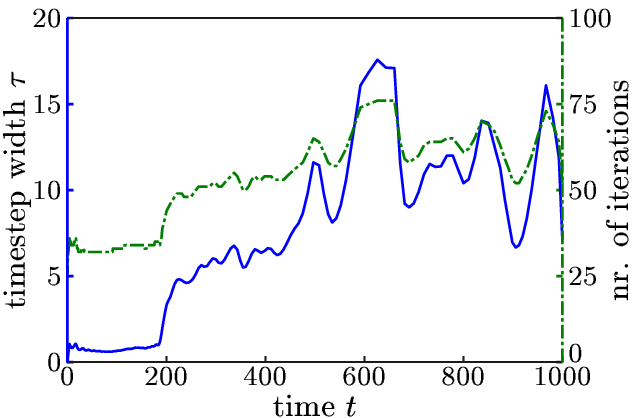} & 
\includegraphics[scale=.96]{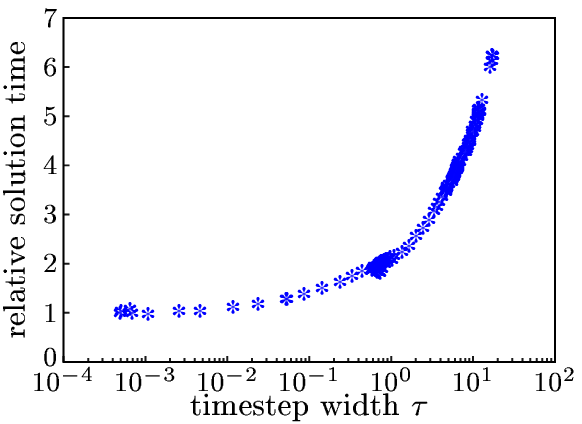}
\end{tabular} 
\caption{Left: Time series of timestep width (solid line) and outer solver iterations (dashed line) for a simulation using a Rosenbrock scheme with automatic step size selection. Right: Evolution of the solution time for increasing timestep widths. The time is measured relative to the time for the minimal timestep width. The data in extracted from the simulation of the grain growth, see Figure \ref{fig:orientation}.}\label{fig:timestep_solution_time}
\end{center}
\end{figure}

\subsection{Parallel calculations}
We now demonstrate parallel scaling properties. Figure \ref{fig:benchmark_speedup_Ts} shows strong and weak scaling results. All simulations are done in 3D and show results for the time to solve the linear system in comparison with a minimal number of processors that have the same communication and memory access environment. The efficiency of this strong scaling benchmark is about 0.8 -- 0.9 depending on the workload per processing unit. The efficiency of the weak scaling is about 0.9 -- 0.95, thus slightly better than the strong scaling. 

\begin{table}[ht]
\begin{center}
\begin{tabular}{c|c|c|c}
\#processors $p$ & total DOFs & time [sec] & \#iterations \\ 
\hline 
48 & 1,245,456 & 13.62 & 24 \\
96 & 2,477,280 & 13.57 & 24 \\
%144 & 3,743,136 & 13.34 & 24 \\
192 & 4,984,512 & 13.83 & 25 \\
%288 & 7,423,488 & 14.56 & 25 \\
384 & 9,976,704 & 14.97 & 25 \\
%432 & 11,117,088 & 15.45 & 25
\end{tabular} 
\end{center}
\caption{Average number of iterations and solution time for weak scaling computations.}\label{tbl:iter_dofs}
\end{table}

In Table \ref{tbl:iter_dofs} the number of outer solver iterations for various system sizes is given. The calculations are performed in parallel on a mesh with constant grid size but with variable domain size. By increasing the number of processors respective the number of degrees of freedom in the system the number of solver iterations remains almost constant. Also the solution time changes only slightly.

Larger systems on up to 3.456 processors also show that the preconditioner does not perturb the scaling behavior of the iterative solvers. All parallel computations have been done on JUROPA at JSC.

\begin{figure}[ht]
\begin{center}
\begin{tabular}{cc}
\includegraphics[scale=.96]{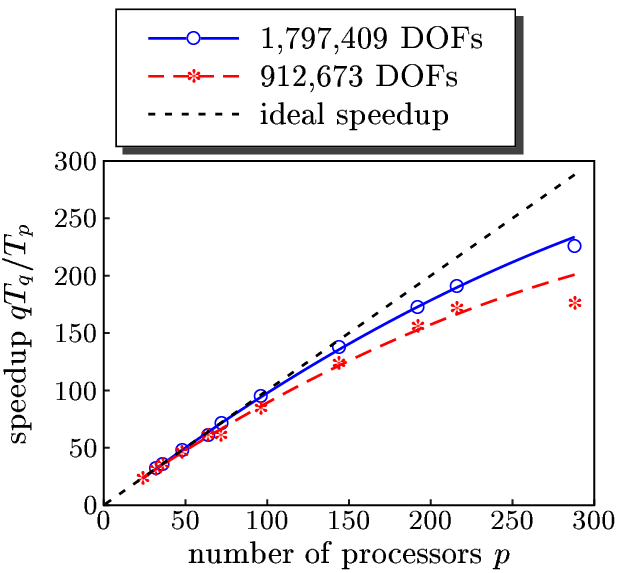} & 
\includegraphics[scale=.96]{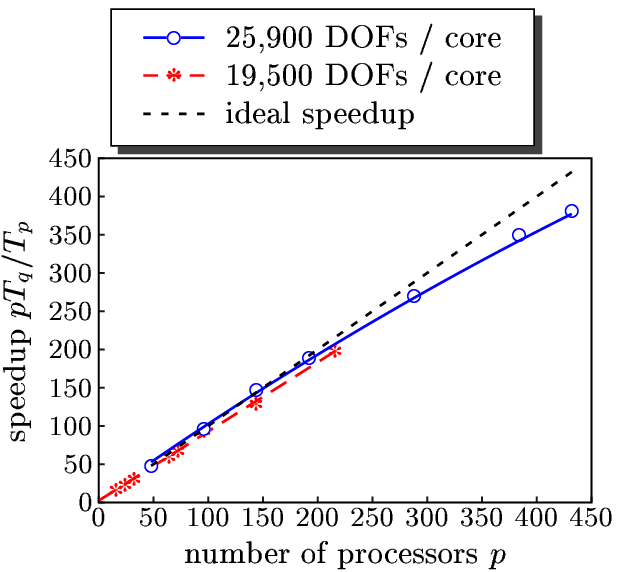}
\end{tabular} 
\caption{Speedup of parallel simulation. Left: strong scaling for fixed overall number of DOFs. Right: weak scaling for fixed number of DOFs per processor. The parameter $q$ is a reference number of processors, that is $q=24/32$ for the calculation in the left plot and $q=16/48$ for the right plot. The first number corresponds to the dashed lines and the second to the solid lines.}\label{fig:benchmark_speedup_Ts}
\end{center}
\end{figure}

\subsection{Non-regular domains}

While all previous benchmark problems could have also been simulated using spectral or finite difference methods, we now demonstrate two examples, where this is no longer possible and the advantage of the proposed solution approach is shown. We use parametric finite elements to solve the surface PFC equation \cite{Backofen2010} on a manifold. The first example considers an elastic instability of a growing crystal on a sphere, similar to the experimental results for colloidal crystals in \cite{Meng2014}. The observed branching of the crystal minimizes the curvature induced elastic energy, see Figure \ref{fig:sphere}. The second example shows a crystalline layer on a minimal surface, the `Schwarz P surface', Figure \ref{fig:schwarz_P}, which might be an approach to stabilize such surfaces by colloidal particles, see \cite{Irvine2010}.

\begin{figure}[ht]
\begin{center}
\begin{tabular}{cc}
\includegraphics[width=.28\linewidth]{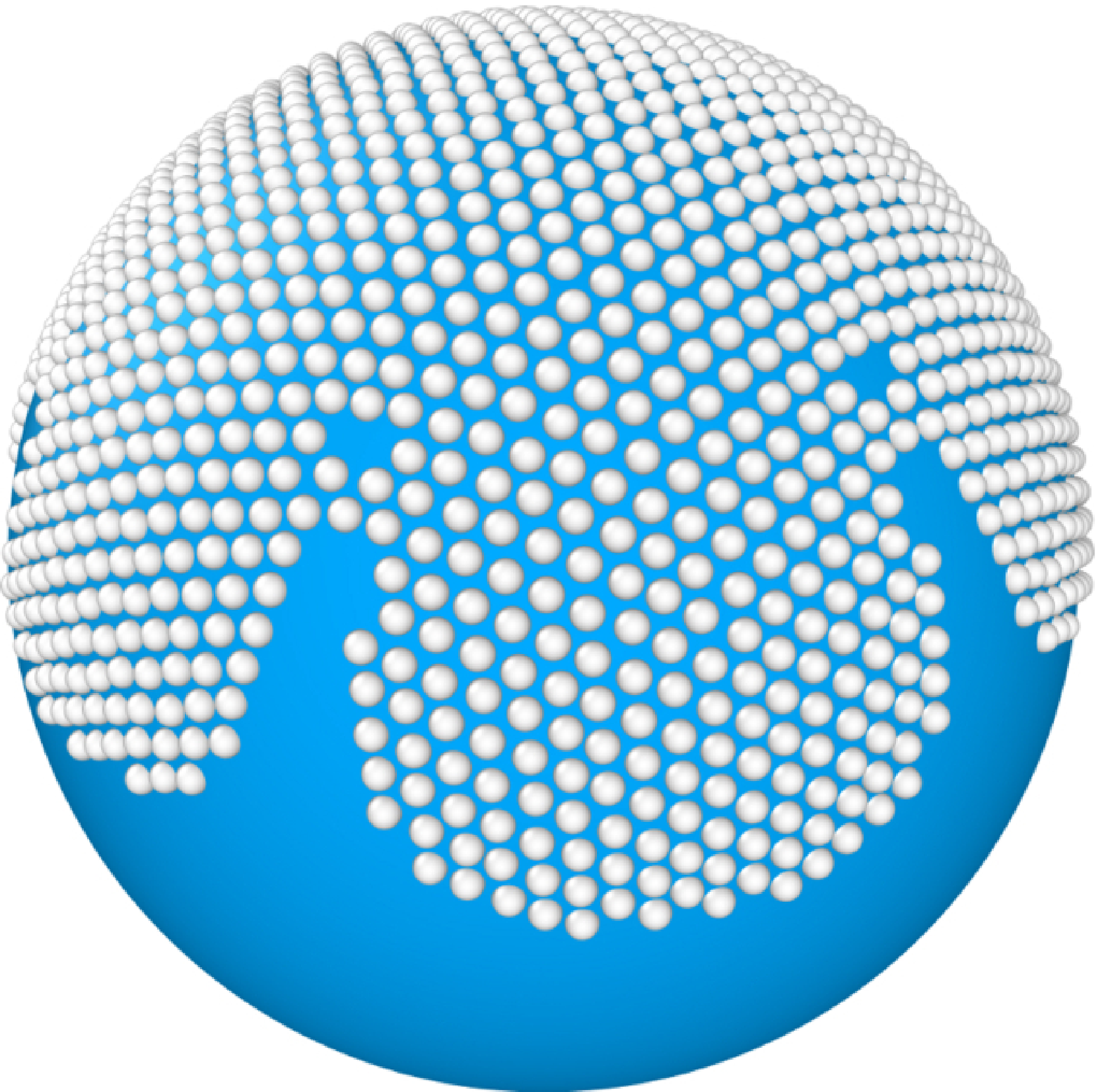}  &
\includegraphics[width=.28\linewidth]{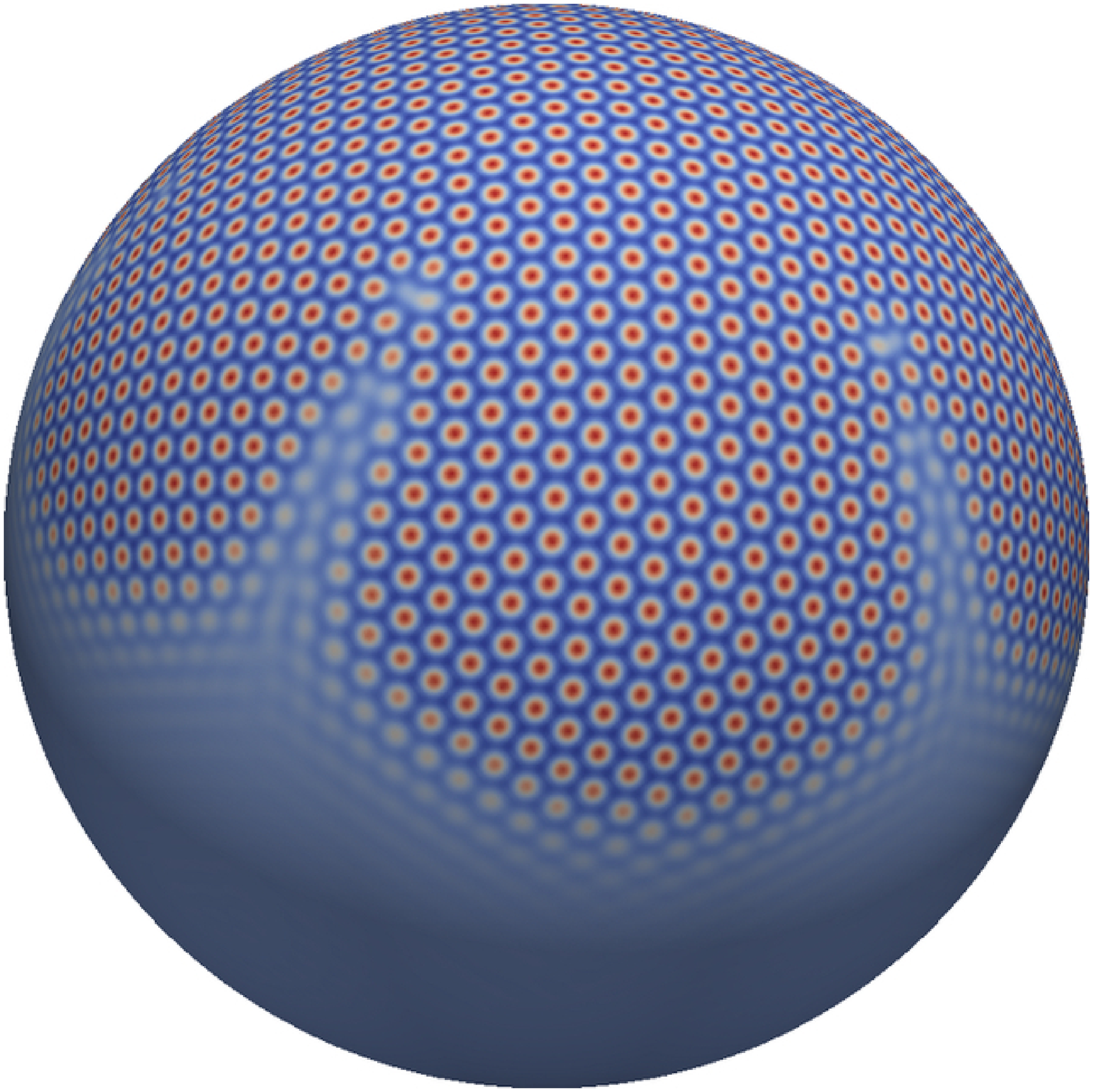}  
\end{tabular} 
\caption{Crystalization on a sphere $\mathcal{S}_{120}(0)$. Left: Visualization using OVITO \cite{Ovito}, indicating each wave as a colloidal particle, Right: order parameter field $\psi$. Number of DOFs: 397,584, calculated on 8 processors.} \label{fig:sphere} 
\end{center}
\end{figure}

\begin{figure}[ht]
\begin{center}
\begin{tabular}{ccc}
\includegraphics[width=.28\linewidth]{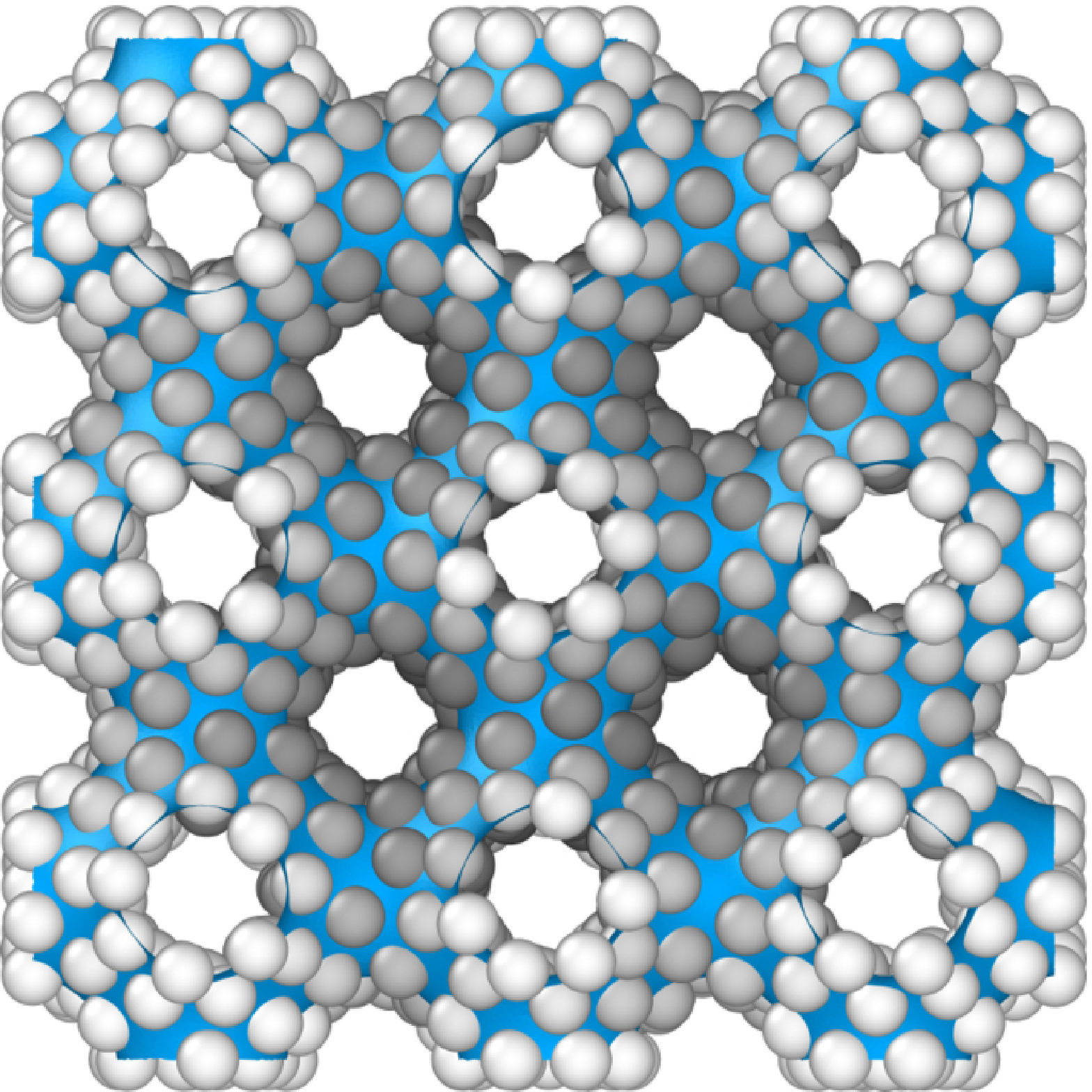}  &
\includegraphics[width=.28\linewidth]{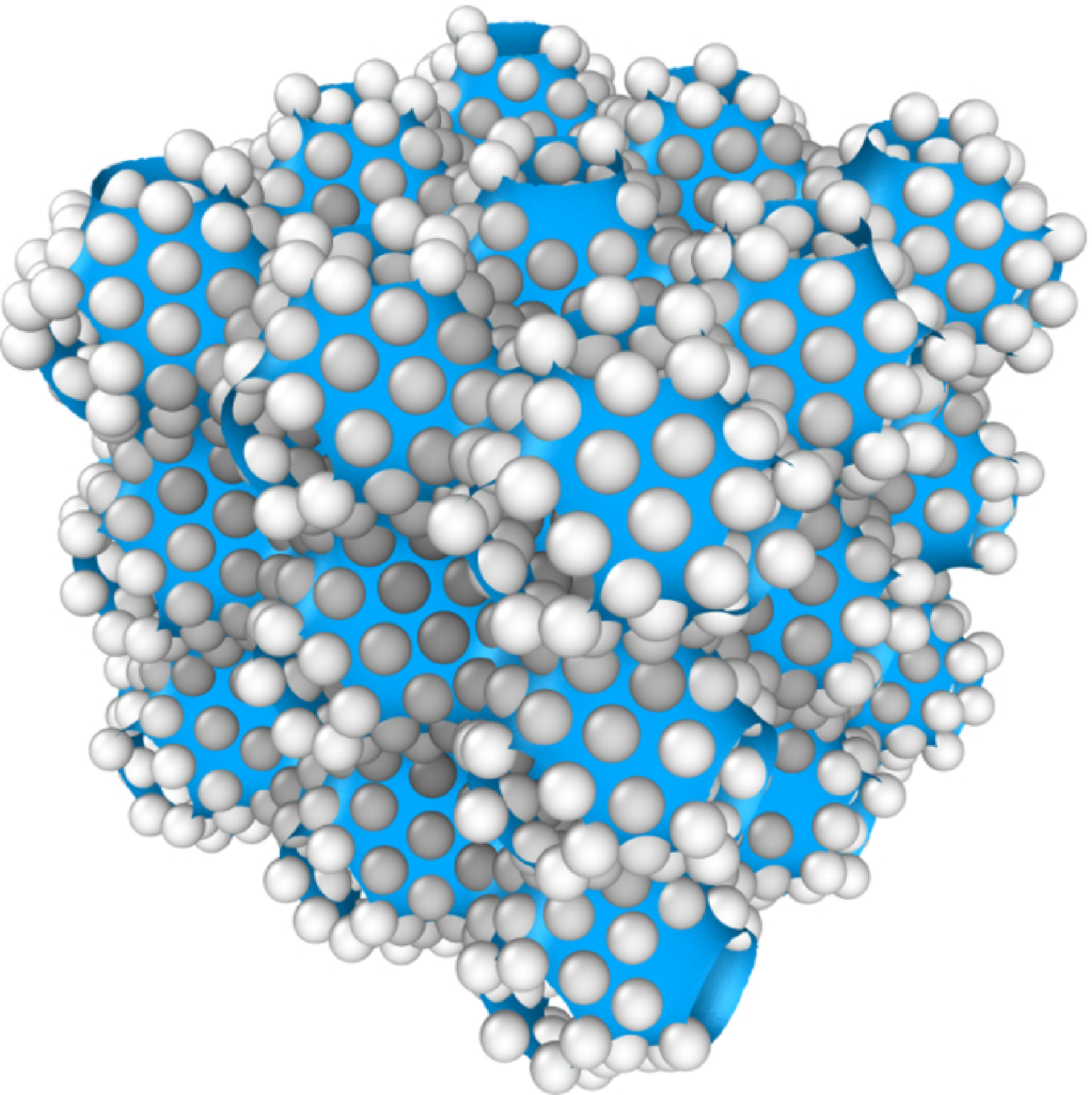}  &
\includegraphics[width=.28\linewidth]{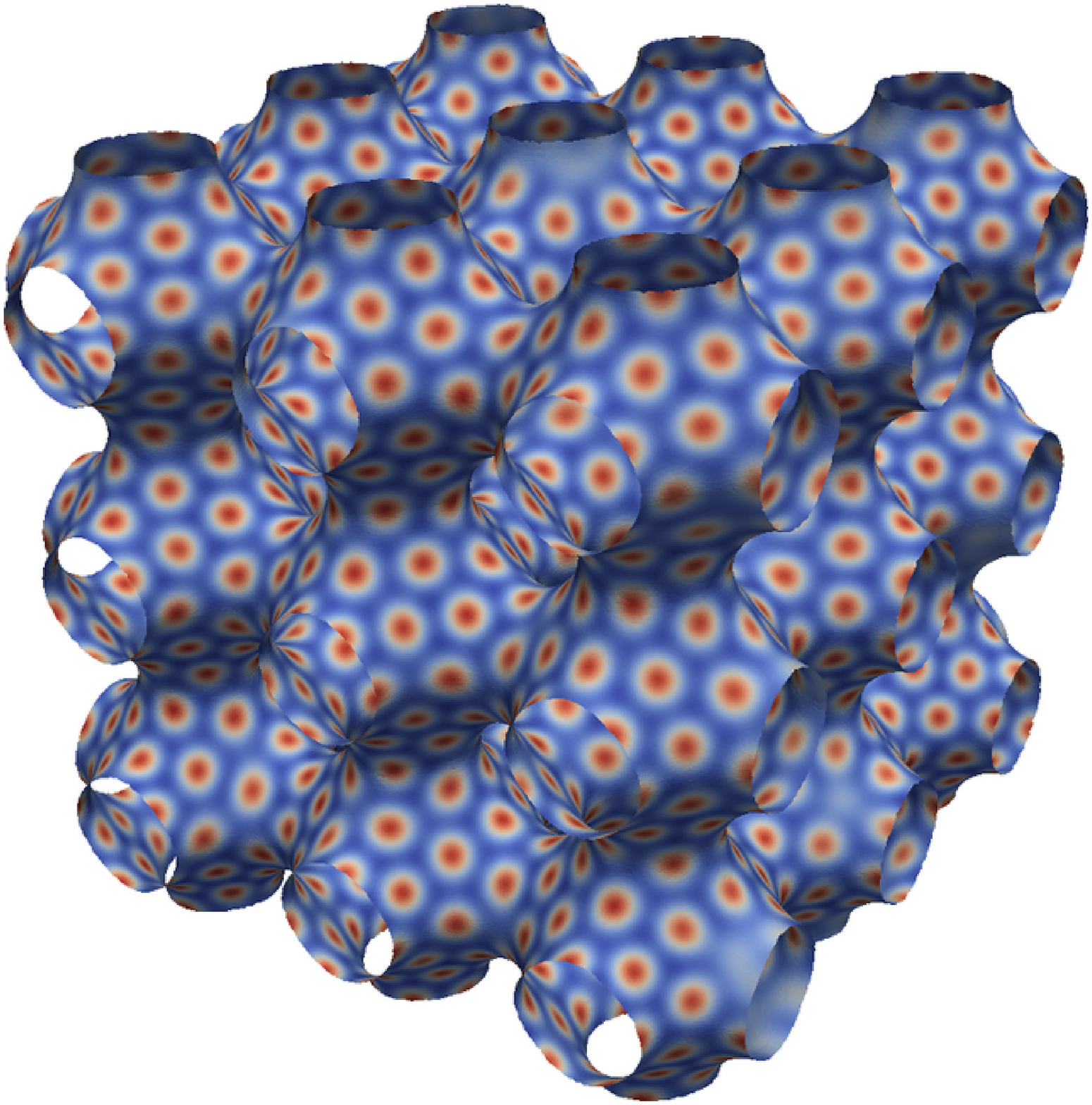} 
\end{tabular} 
\caption{Crystal structure on a  `Schwarz P surface'. Left and Center: Visualization using OVITO \cite{Ovito} in two different perspectives, indicating each wave as a colloidal particle, Right: order parameter field $\psi$. Number of DOFs: 250,000, calculated on 1 processor.} \label{fig:schwarz_P} % 
\end{center}
\end{figure}

% ===========================================================================
% ===========================================================================
\section{Conclusion and outlook}\label{seq:conclusion}
In this paper we have developed a block-precon\-ditioner for the Phase-Field Crystal equation. It leads to a precondition procedure in 5 steps that can be implemented by composition of simple iterative solvers. Additionally, we have analyzed the preconditioner in Fourier-space and in numerical experiments. We have found a critical timestep width for the original preconditioner and have proposed a variant with an inner Cahn-Hilliard preconditioner, that does not show this timestep limit. Since most of the calculations are performed in parallel, a scaling study is provided, that shows, that there is no negative influence of the preconditioner on the scaling properties. Thus, large scale calculations in 2D and 3D can be performed.

Recently extensions of the classical PFC model are published, towards liquid crystalline phases \cite{Praetorius2013}, flowing crystals \cite{Menzel2013} and more. Analyzing the preconditioner for these systems, that are extended by additional coupling terms, is a planed task. 

Higher order models, to describe quasicrystalline states, respective polycrystalline states, based on a conserved Lifshitz-Petrich model \cite{Lifshitz1997}, respective a multimode PFC model \cite{Mkhonta2013}, are introduced and lead to even worse convergence behavior in finite element calculations than the classical PFC model. This leads to the question of an effective preconditioner for these models. The basic ideas, introduced here, might be applicable for the corresponding discretized equations as well.

% ====================================================================================
% ====================================================================================

\begin{appendix}

\end{appendix}

\bibliographystyle{siam}
\bibliography{references}

\begin{thebibliography}{10}

\bibitem{Achim2011}
{\sc C.~V. Achim, R.~Wittkowski, and H.~L\"owen}, {\em Stability of liquid
  crystalline phases in the phase-field-crystal model}, Phys. Rev. E, 83
  (2011), p.~061712.

\bibitem{Aland2011}
{\sc S.~Aland, J.~Lowengrub, and A.~Voigt}, {\em A continuum model of
  colloid-stabilized interfaces}, Physics of Fluids, 23 (2011), p.~062103.

\bibitem{Aland2012}
\leavevmode\vrule height 2pt depth -1.6pt width 23pt, {\em Particles at
  fluid-fluid interfaces: A new navier-stokes-cahn-hilliard surface-
  phase-field-crystal model}, Physical Review E, 86 (2012), p.~046321.

\bibitem{Aland2012c}
{\sc S.~Aland, A.~Rätz, M.~Röger, and A.~Voigt}, {\em Buckling instability of
  viral capsids -- a continuum approach}, Multiscale Modeling \& Simulation, 10
  (2012), pp.~82--110.

\bibitem{Rauscher2011}
{\sc L.~Almenar and M.~Rauscher}, {\em Dynamics of colloids in confined
  geometries}, Journal of Physics: Condensed Matter, 23 (2011), p.~184115.

\bibitem{MUMPS}
{\sc P.~R. Amestoy, I.~S. Duff, J.~Koster, and J.-Y. L'Excellent}, {\em A fully
  asynchronous multifrontal solver using distributed dynamic scheduling}, SIAM
  Journal on Matrix Analysis and Applications, 23 (2001), pp.~15--41.

\bibitem{Archer2009}
{\sc A.~J. Archer}, {\em {Dynamical density functional theory for molecular and
  colloidal fluids: a microscopic approach to fluid mechanics.}}, The Journal
  of chemical physics, 130 (2009), p.~014509.

\bibitem{Axelsson2013}
{\sc O.~Axelsson, P.~Boyanova, M.~Kronbichler, M.~Neytcheva, and X.~Wu}, {\em
  Numerical and computational efficiency of solvers for two-phase problems},
  Computers \& Mathematics with Applications, 65 (2013), pp.~301--314.

\bibitem{Backofen2010a}
{\sc E.~Backofen and A.~Voigt}, {\em A phase-field-crystal approach to critical
  nuclei}, Journal of Physics: Condensed Matter, 22 (2010), p.~364104.

\bibitem{Backofen2014a}
{\sc R.~Backofen, K.~Barmak, K.~E. Elder, and A.~Voigt}, {\em Capturing the
  complex physics behind universal grain size distributions in thin metallic
  films}, Acta Materialia, 64 (2014), pp.~72--77.

\bibitem{Backofen2011}
{\sc R.~Backofen, M.~Gr{\"a}f, D.~Potts, S.~Praetorius, A.~Voigt, and
  T.~Witkowski}, {\em A continuous approach to discrete ordering on
  $\mathbb{S}^2$}, Multiscale Modeling \& Simulation, 9 (2011), pp.~314--334.

\bibitem{Backofen2007}
{\sc R.~Backofen, A.~R\"{a}tz, and A.~Voigt}, {\em {Nucleation and growth by a
  phase field crystal (PFC) model}}, Philosophical Magazine Letters, 87 (2007),
  pp.~813--820.

\bibitem{Backofen2014}
{\sc R.~Backofen and A.~Voigt}, {\em A phase field crystal study of
  heterogeneous nucleation -- application of the string method}, The European
  Physical Journal Special Topics, 223 (2014), pp.~497--509.

\bibitem{Backofen2010}
{\sc R.~Backofen, A.~Voigt, and T.~Witkowski}, {\em Particles on curved
  surfaces: A dynamic approach by a phase-field-crystal model}, Phys. Rev. E,
  81 (2010), p.~025701.

\bibitem{PETSc}
{\sc A.~Balay, M.~F. Adams, J.~Brown, P.~Brune, K.~Buschelman, V.~Eijkhout,
  W.~D. Gropp, D.~Kaushik, M.~G. Knepley, L.~C. McInnes, K.~Rupp, B.~F. Smith,
  and H.~Zhang}, {\em {PETS}c}.
\newblock \url{http://www.mcs.anl.gov/petsc}, 2014.

\bibitem{Boyanova2012b}
{\sc P.~Boyanova, M.~Do-Quang, and M.~Neytcheva}, {\em Block-preconditioners
  for conforming and non-conforming fem discretizations of the cahn-hilliard
  equation}, in Large-Scale Scientific Computing, I.~Lirkov, S.~Margenov, and
  J.~Waśniewski, eds., vol.~7116 of Lecture Notes in Computer Science,
  Springer Berlin Heidelberg, 2012, pp.~549--557.

\bibitem{Boyanova2012}
\leavevmode\vrule height 2pt depth -1.6pt width 23pt, {\em Efficient
  preconditioners for large scale binary cahn-hilliard models.}, Comput. Meth.
  in Appl. Math., 12 (2012), pp.~1--22.

\bibitem{Cahn1958}
{\sc J.~W. Cahn and J.~E. Hilliard}, {\em Free energy of a nonuniform system.
  i. interfacial free energy}, The Journal of Chemical Physics, 28 (1958),
  pp.~258--267.

\bibitem{Chan2010}
{\sc P.~Y. Chan, G.~Tsekenis, J.~Dantzig, K.~A. Dahmen, and N.~Goldenfeld},
  {\em Plasticity and dislocation dynamics in a phase field crystal model},
  Phys. Rev. Lett., 105 (2010), p.~015502.

\bibitem{Cheng2008}
{\sc M.~Cheng and J.~A. Warren}, {\em {An efficient algorithm for solving the
  phase field crystal model}}, Journal of Computational Physics, 227 (2008),
  pp.~6241--6248.

\bibitem{Davis2004}
{\sc T.~A. Davis}, {\em Algorithm 832: Umfpack v4.3---an unsymmetric-pattern
  multifrontal method}, ACM Trans. Math. Softw., 30 (2004), pp.~196--199.

\bibitem{MTL4-cuda}
{\sc D.~Demidov, K.~Ahnert, K.~Rupp, and P.~Gottschling}, {\em Programming cuda
  and opencl: A case study using modern c++ libraries}, SIAM Journal on
  Scientific Computing, 35 (2013), pp.~C453--C472.

\bibitem{Driscoll1998}
{\sc T.~Driscoll, K.~Toh, and L.~Trefethen}, {\em From potential theory to
  matrix iterations in six steps}, SIAM Review, 40 (1998), pp.~547--578.

\bibitem{Driscoll1998b}
{\sc T.~Driscoll and S.~Vavasis}, {\em Numerical conformal mapping using
  cross-ratios and delaunay triangulation}, SIAM Journal on Scientific
  Computing, 19 (1998), pp.~1783--1803.

\bibitem{Driscoll1996}
{\sc T.~A. Driscoll}, {\em Algorithm 756: A matlab toolbox for
  schwarz-christoffel mapping}, ACM Trans. Math. Softw., 22 (1996),
  pp.~168--186.

\bibitem{Elder2002}
{\sc K.~Elder, M.~Katakowski, M.~Haataja, and M.~Grant}, {\em {Modeling
  Elasticity in Crystal Growth}}, Physical Review Letters, 88 (2002),
  p.~245701.

\bibitem{Elder2004}
{\sc K.~R. Elder and M.~Grant}, {\em Modeling elastic and plastic deformations
  in nonequilibrium processing using phase field crystals}, Phys. Rev. E, 70
  (2004), p.~051605.

\bibitem{Elder2007}
{\sc K.~R. Elder, N.~Provatas, J.~Berry, P.~Stefanovic, and M.~Grant}, {\em
  Phase-field crystal modeling and classical density functional theory of
  freezing}, Phys. Rev. B, 75 (2007), p.~064107.

\bibitem{Elsey2013}
{\sc M.~Elsey and B.~Wirth}, {\em A simple and efficient scheme for phase field
  crystal simulation}, ESAIM: Mathematical Modelling and Numerical Analysis, 47
  (2013), pp.~1413--1432.

\bibitem{Embree1999}
{\sc M.~Embree}, {\em How descriptive are gmres convergence bounds?}, tech.
  rep., Oxford University Computing Laboratory, 1999.

\bibitem{Emmerich2013}
{\sc H.~Emmerich, H.~Löwen, R.~Wittkowski, T.~Gruhn, G.~I. T\`{o}th, G.~Tegze,
  and L.~Gr\`{a}n\`{a}sy}, {\em Phase-field-crystal models for condensed matter
  dynamics on atomic length and diffusive time scales: an overview}, Advances
  in Physics, 61 (2012), pp.~665--743.

\bibitem{Fallah2013}
{\sc V.~Fallah, N.~Ofori-Opoku, J.~Stolle, N.~Provatas, and S.~Esmaeili}, {\em
  Simulation of early-stage clustering in ternary metal alloys using the
  phase-field crystal method}, Acta Materialia, 61 (2013), pp.~3653--3666.

\bibitem{MTL4}
{\sc P.~Gottschling and A.~Lumsdaine}, {\em {MTL}4}.
\newblock \url{http://www.simunova.com/en/node/24}, 2014.

\bibitem{AMDiS-MTL4}
{\sc P.~Gottschling, T.~Witkowski, and A.~Voigt}, {\em Integrating
  object-oriented and generic programming paradigms in real-world software
  environments: Experiences with {AMDiS} and {MTL4}}, in POOSC 2008 workshop at
  ECOOP08, Paphros, Cyprus, 2008.

\bibitem{Guo2012}
{\sc Y.~Guo, J.~Wang, Z.~Wang, S.~Tang, and Y.~Zhou}, {\em Phase field crystal
  model for the effect of colored noise on homogenerous nucleation}, Acta
  Physica Sinica, 61 (2012), p.~146401.

\bibitem{Hairer1993II}
{\sc E.~Hairer, S.~S.~P. N{\o}rsett, and G.~Wanner}, {\em Solving Ordinary
  Differential Equations II: Stiff and Differential-Algebraic Problems},
  Lecture Notes in Economic and Mathematical Systems, Springer, 1993.

\bibitem{Hirouchi2009}
{\sc T.~Hirouchi, T.~Takaki, and Y.~Tomita}, {\em Development of numerical
  scheme for phase field crystal deformation simulation}, Computational
  Materials Science, 44 (2009), pp.~1192--1197.

\bibitem{Hu2009}
{\sc Z.~Hu, S.~M. Wise, C.~Wang, and J.~S. Lowengrub}, {\em {Stable and
  efficient finite-difference nonlinear-multigrid schemes for the phase field
  crystal equation}}, Journal of Computational Physics, 228 (2009),
  pp.~5323--5339.

\bibitem{Irvine2010}
{\sc W.~T.~M. Irvine, V.~Vitelli, and P.~M. Chaikin}, {\em Pleats in crystals
  on curved surfaces}, Nature, 468 (2010), pp.~947--951.

\bibitem{Jaatinen2010}
{\sc A.~Jaatinen and T.~Ala-Nissila}, {\em Extended phase diagram of the
  three-dimensional phase field crystal model}, Journal of Physics: Condensed
  Matter, 22 (2010), p.~205402.

\bibitem{John2006}
{\sc V.~John, G.~Matthies, and J.~Rang}, {\em {A comparison of
  time-discretization/linearization approaches for the incompressible
  Navier–Stokes equations}}, Computer Methods in Applied Mechanics and
  Engineering, 195 (2006), pp.~5995--6010.

\bibitem{John2010}
{\sc V.~John and J.~Rang}, {\em {Adaptive time step control for the
  incompressible Navier–Stokes equations}}, Computer Methods in Applied
  Mechanics and Engineering, 199 (2010), pp.~514--524.

\bibitem{Jordan1997}
{\sc R.~Jordan, D.~Kinderlehrer, and F.~Otto}, {\em {Free energy and the
  Fokker-Planck equation}}, Physica D: Nonlinear Phenomena, 107 (1997),
  pp.~265--271.

\bibitem{Kuijlaars2005}
{\sc A.~Kuijlaars}, {\em Convergence analysis of krylov subspace iterations
  with methods from potential theory}, SIAM Review, 48 (2006), pp.~3--40.

\bibitem{Kundin2014}
{\sc J.~Kundin, M.~A. Choudhary, and H.~Emmerich}, {\em Bridging the
  phase-field and phase-field crystal approaches for anisotropic material
  systems}, The European Physical Journal Special Topics, 223 (2014),
  pp.~363--372.

\bibitem{Lang2010}
{\sc J.~Lang}, {\em Adaptive Multilevel Solution of Nonlinear Parabolic PDE
  Systems: Theory, Algorithm, and Applications}, Lecture Notes in Computational
  Science and Engineering, Springer, 2010.

\bibitem{SuperLU}
{\sc X.~S. Li and J.~W. Demmel}, {\em {SuperLU\_DIST}: A scalable
  distributed-memory sparse direct solver for unsymmetric linear systems}, ACM
  Trans. Mathematical Software, 29 (2003), pp.~110--140.

\bibitem{Lifshitz1997}
{\sc R.~Lifshitz and D.~M. Petrich}, {\em Theoretical model for faraday waves
  with multiple-frequency forcing}, Phys. Rev. Lett., 79 (1997),
  pp.~1261--1264.

\bibitem{Meng2014}
{\sc G.~Meng, J.~Paulose, D.~R. Nelson, and V.~N. Manoharan}, {\em Elastic
  instability of a crystal growing on a curved surface}, Science, 343 (2014),
  pp.~634--637.

\bibitem{Menzel2013}
{\sc A.~M. Menzel and H.~L\"{o}wen}, {\em {Traveling and Resting Crystals in
  Active Systems}}, Physical Review Letters, 110 (2013), p.~055702.

\bibitem{Mkhonta2013}
{\sc S.~K. Mkhonta, K.~R. Elder, and Z.~Huang}, {\em Exploring the complex
  world of two-dimensional ordering with three modes}, Phys. Rev. Lett., 111
  (2013), p.~035501.

\bibitem{Pearson2012}
{\sc J.~W. Pearson and A.~J. Wathen}, {\em A new approximation of the schur
  complement in preconditioners for pde-constrained optimization}, Numerical
  Linear Algebra with Applications, 19 (2012), pp.~816--829.

\bibitem{Praetorius2011}
{\sc S.~Praetorius and A.~Voigt}, {\em {A Phase Field Crystal Approach for
  Particles in a Flowing Solvent}}, Macromolecular Theory and Simulations, 20
  (2011), pp.~541--547.

\bibitem{Praetorius2013}
{\sc S.~Praetorius, A.~Voigt, R.~Wittkowski, and H.~L\"owen}, {\em Structure
  and dynamics of interfaces between two coexisting liquid-crystalline phases},
  Phys. Rev. E, 87 (2013), p.~052406.

\bibitem{Rang2005}
{\sc J.~Rang and L.~Angermann}, {\em {New Rosenbrock W-Methods of Order 3 for
  Partial Differential Algebraic Equations of Index 1}}, BIT Numerical
  Mathematics, 45 (2005), pp.~761--787.

\bibitem{Ransford1995}
{\sc T.~Ransford}, {\em Potential Theory in the Complex Plane}, London
  Mathematical Society Student Texts, Cambridge University Press, 1995.

\bibitem{AMDiS_parallel}
{\sc A.~Ribalta, C.~Stoecker, S.~Vey, and A.~Voigt}, {\em Amdis -- adaptive
  multidimensional simulations: Parallel concepts}, in Domain Decomposition
  Methods in Science and Engineering XVII, U.~Langer, M.~Discacciati, D.~E.
  Keyes, O.~B. Widlund, and W.~Zulehner, eds., vol.~60 of Lecture Notes in
  Computational Science and Engineering, Springer Berlin Heidelberg, 2008,
  pp.~615--621.

\bibitem{Saad1993}
{\sc Y.~Saad}, {\em A flexible inner-outer preconditioned gmres algorithm},
  SIAM Journal on Scientific Computing, 14 (1993), pp.~461--469.

\bibitem{Saad1986}
{\sc Y.~Saad and M.~H. Schultz}, {\em Gmres: A generalized minimal residual
  algorithm for solving nonsymmetric linear systems}, SIAM J. Sci. Stat.
  Comput., 7 (1986), pp.~856--869.

\bibitem{Schmid2014}
{\sc V.~Schmid and A.~Voigt}, {\em Crystalline order and topological charges on
  capillary bridges}, Soft Matter, 10 (2014), pp.~4694--4699.

\bibitem{Ovito}
{\sc A.~Stukowski}, {\em Visualization and analysis of atomistic simulation
  data with ovito--the open visualization tool}, Modelling and Simulation in
  Materials Science and Engineering, 18 (2010), p.~015012.

\bibitem{Tang20111}
{\sc S.~Tang, R.~Backofen, J.~Wang, Y.~Zhou, A.~Voigt, and Y.~Yu}, {\em
  Three-dimensional phase-field crystal modeling of fcc and bcc dendritic
  crystal growth}, Journal of Crystal Growth, 334 (2011), pp.~146--152.

\bibitem{Tegze2009}
{\sc G.~Tegze, G.~Bansel, G.~T{\`o}th, T.~Pusztai, Z.~Fan, and
  L.~Gr{\'a}n{\'a}sy}, {\em Advanced operator splitting-based semi-implicit
  spectral method to solve the binary phase-field crystal equations with
  variable coefficients}, Journal of Computational Physics, 228 (2009),
  pp.~1612--1623.

\bibitem{Trefethen2005}
{\sc L.~N. Trefethen and M.~Embree}, {\em Spectra and pseudospectra: the
  behavior of nonnormal matrices and operators}, Princeton University Press,
  2005.

\bibitem{Teeffelen2009}
{\sc S.~van Teeffelen, R.~Backofen, A.~Voigt, and H.~L\"{o}wen}, {\em
  {Derivation of the phase-field-crystal model for colloidal solidification}},
  Physical Review E, 79 (2009), p.~051404.

\bibitem{AMDiS}
{\sc S.~Vey and A.~Voigt}, {\em {AMDiS: adaptive multidimensional
  simulations}}, Computing and Visualization in Science, 10 (2007), pp.~57--67.

\bibitem{Wathen1987}
{\sc A.~J. Wathen}, {\em Realistic eigenvalue bounds for the galerkin mass
  matrix}, IMA Journal of Numerical Analysis, 7 (1987), pp.~449--457.

\bibitem{Wise2009}
{\sc S.~M. Wise, C.~Wang, and J.~S. Lowengrub}, {\em {An Energy-Stable and
  Convergent Finite-Difference Scheme for the Phase Field Crystal Equation}},
  SIAM Journal on Numerical Analysis, 47 (2009), pp.~2269--2288.

\bibitem{Eigtool}
{\sc T.~G. Wright}, {\em Eigtool}.
\newblock \url{http://www.comlab.ox.ac.uk/pseudospectra/eigtool}, 2002.

\bibitem{Wu2012}
{\sc K.~Wu and P.~W. Voorhees}, {\em Phase field crystal simulations of
  nanocrystalline grain growth in two dimensions}, Acta Materialia, 60 (2012),
  pp.~407--419.

\end{thebibliography}
\end{document}